\newcommand{\CkD}[1]{{\sc Consensus-$#1$-Division}}
\newcommand{\CH}{{\sc Consensus-Halving}}
\newcommand{\ESAT}{{\sc Exactly-1-3Sat}}
\newcommand{\Xcomment}[1]{{}}
\newcommand{\newclass}[2]{\newcommand{#1}{{\text{\upshape\sffamily #2}}}}
\newclass{\NP}{NP}
\newclass{\coNP}{coNP}
\newclass{\FP}{FP}
\newclass{\TFNP}{TFNP}
\newclass{\PLS}{PLS}
\newclass{\PPA}{PPA}
\newclass{\PPAD}{PPAD}
\newclass{\PPADS}{PPADS}
\newclass{\PPP}{PPP}
\newclass{\PWPP}{PWPP}
\newclass{\CLS}{CLS}
\newclass{\EOPL}{EOPL}
\newclass{\SOPL}{SOPL}
\newclass{\UEOPL}{UEOPL}
\newclass{\cA}{A}
\newclass{\cB}{B}
\newclass{\BPP}{BPP}
\definecolor{mygray}{gray}{0.6}
\newtheorem{theorem}{Theorem}[section]
\newtheorem{observation}{Observation}
\newtheorem{lemma}[theorem]{Lemma}
\newtheorem{corollary}[theorem]{Corollary}
\newtheorem{definition}[theorem]{Definition}
\newtheorem*{remark*}{Remark}
\newtheorem{inftheorem}{Informal Theorem}
\newtheorem{conjecture}{Conjecture}
\newcommand{\imagesize}{0.45}
\newcommand{\jiawei}[1]{{\color{blue}\noindent\textbf{JIAWEI says }\marginpar{****}\textit{{#1}}}}
\definecolor{arsenic}{rgb}{0.23, 0.27, 0.29}
\definecolor{cadet}{rgb}{0.33, 0.41, 0.47}
\definecolor{airforceblue}{rgb}{0.36, 0.54, 0.66}
\definecolor{midnightblue}{rgb}{0.1, 0.1, 0.44}
\title{Consensus Division in an Arbitrary Ratio}
\author{
    %Submission 221
     Paul W. Goldberg\footnote{Oxford University, UK. E-mail: \href{mailto:paul.goldberg@cs.ox.ac.uk}{paul.goldberg@cs.ox.ac.uk}.}
 	\and
 	Jiawei Li\footnote{UT Austin, USA. E-mail: 
 		\href{mailto:davidlee@cs.utexas.edu}{davidlee@cs.utexas.edu}.}
}
\date{}
\begin{document}

\maketitle

\begin{abstract}
We consider the problem of partitioning a line segment into two subsets, so that $n$ finite measures all have the same ratio of values for the subsets. Letting $\alpha\in[0,1]$ denote the desired ratio, this generalises the \PPA-complete {\em consensus-halving} problem, in which $\alpha=\frac{1}{2}$. Stromquist and Woodall~\cite{stromquist1985sets} showed that for any $\alpha$, there exists a solution using $2n$ cuts of the segment. They also showed that if $\alpha$ is irrational, that upper bound is almost optimal. In this work, we elaborate the bounds for rational values $\alpha$. For $\alpha = \frac{\ell}{k}$, we show a lower bound of $\frac{k-1}{k} \cdot 2n - O(1)$ cuts; we also obtain almost matching upper bounds for a large subset of rational $\alpha$.

On the computational side, we explore its dependence on the number of cuts available. More specifically, 
\begin{enumerate}
    \item when using the minimal number of cuts for each instance is required, the problem is \NP-hard for any $\alpha$; 
    \item for a large subset of rational $\alpha = \frac{\ell}{k}$, when $\frac{k-1}{k} \cdot 2n$ cuts are available, the problem is in \PPA-$k$ under Turing reduction;
    \item when $2n$ cuts are allowed, the problem belongs to \PPA\ for any $\alpha$; more generally, the problem belong to \PPA-$p$ for any prime $p$ if $2(p-1)\cdot \frac{\lceil p/2 \rceil}{\lfloor p/2 \rfloor} \cdot n$ cuts are available.
\end{enumerate}

\end{abstract}

\newpage
%\newpage
\section{Introduction}

The complexity class \TFNP\ (standing for total functions, computable in nondeterministic polynomial time), refers to problems of computing a solution that is guaranteed to exist, and once found can be easily checked for correctness. Such problems are of particular interest when they appear to be computationally hard, due to the fact that they cannot be NP-hard unless $\NP = \coNP$ \cite{megiddo1991total}. Due to this point and the fact that \TFNP\ does not seem to have complete problems, hard problems in \TFNP\ have been classified via certain syntactic subclasses corresponding to the combinatorial existence principles that guarantee the existence of solutions, without indicating an efficient algorithm for their construction. They include the well-known classes \PPAD, \PPA, and \PPP, introduced by Papadimitriou \cite{papadimitriou1994complexity} in 1994. \PPAD\ represents the complexity of Nash equilibrium computation and related problems, and more recently, \PPA\ has been shown to capture the complexity of certain problems of consensus division, discussed in more detail below. Papadimitriou~\cite{papadimitriou1994complexity} also pointed out a collection of classes \PPA-$k$ (where $k\geq 2$ is a natural number).  \PPA-$k$ (Definition \ref{def:ppa-k} below) consists of problems where the existence guarantee of solutions is due to a modulo-$k$ counting argument, and these classes turn out to be relevant for the problem we study here. Note that \PPA\ is the same as \PPA-$2$: \PPA\ stands for ``polynomial parity argument'', i.e. a modulo-2 counting argument.

We study a problem arising from a 1985 result of Stromquist and Woodall \cite{stromquist1985sets} saying roughly, that if we have $n$ valuation measures of an interval $A$, and we want to divide $A$ into two shares whose values -- with respect to each of the $n$ measures -- are in some given ratio $\alpha:1-\alpha$ (for $\alpha\in(0,1)$), then it's possible to divide up the interval using at most $2n$ cut points, so that the pieces can indeed be allocated to the two shares to get the desired outcome. This is what we mean by consensus division: if each valuation measure comes from a separate agent, then the objective is to split up a resource (the interval) in such a way that all agents agree on the values of the partition. We call this problem {\em imbalanced-consensus-division}. What is the complexity of computing such a partition of the interval? This problem generalizes {\em consensus-halving} (discussed in more detail in Section \ref{sec:bg}), the special case when $\alpha=1/2$. In this case of $\alpha=1/2$, just $n$ cuts suffice to find a solution, and the problem of computing a suitable set of $n$ cuts was recently shown to be \PPA-complete \cite{filos2018consensus,filos2019complexity,DFH21,DFHM22} (the latter paper shows \PPA-hardness even for additive-constant approximation). That gives the consensus-halving problem a novel complexity-theoretic status and also indicates that it is highly unlikely to be solvable in polynomial time.

A closely related problem is \CkD{k}, which also generalizes {\em consensus-halving} ($k=2$) and has been studied in recent works~\cite{filos2020topological,filos2020consensus} due to its connection to \PPA-$k$. \CkD{k} is the problem of splitting the interval into $k\geq 2$ shares all of the equal value with respect to all measures.  We study the connection between \CkD{k} and our model. In one direction, several results in our paper are based on existing results of \CkD{k} (Theorems \ref{thm:UB}, \ref{thm:complexity_minimum_cuts}); in the other direction, we show a novel complexity result for \CkD{k} itself as a by-product of studying {\em imbalanced-consensus-division} (Corollary~\ref{cor:ckdmorecuts}). % \jiawei{The font for the \CkD{k} problem?}
% \jiawei{link to formal theorem.}

We also define \emph{imbalanced-necklace-splitting} as a generalization of the \emph{necklace-splitting} problem~\cite{alon86, alon1987splitting}, which could be taken as a discrete version of \emph{consensus-halving}.
The equivalence between \emph{necklace-splitting} and \emph{consensus-halving} (with inverse-polynomial approximation)~\cite{filos2018consensus} can be generalized to our \emph{imbalanced} variant (Theorem~\ref{thm:INS2ICD}). We focus on {\em imbalanced-consensus-division} in this paper, while all our results can be extended to \emph{imbalanced-necklace-splitting} via Theorem~\ref{thm:INS2ICD}. % \jiawei{Change lemma statement?}.

\subsection{Our Contribution}
On the combinatorial (as opposed to computational) side, we obtain more detailed bounds on the number of cuts that may be needed in the worst case for \emph{rational} ratio $\alpha$. Recall that Stromquist and Woodall~\cite{stromquist1985sets} showed a general upper bound of $2n$ cuts for any ratio, and proved the tightness of this bound for all irrational $\alpha$.
A series of instances are constructed in Section~\ref{sec:LB} which provides the lower bound.

\begin{inftheorem}
    For any rational ratio $\alpha = \ell / k \in (0,1)$, roughly $\frac{2(k-1)}{k} \cdot n$ cuts are needed in the worst case.
\end{inftheorem}

A visualization of the lower bound is in Figure~\ref{fig:thomae}. We believe our lower bound for any rational ratio is tight.
% \jiawei{A figure here.}
    \begin{figure}[h!]
        \centering
        \begin{tikzpicture}
            \input{thomae_tikz}
            \draw[->] (0, 2.9) -> (10.6, 2.9) node [below right]{$\alpha$};
            \draw[->] (0, 2.9) -> (0, 7.2);
            \draw[thick] (5, 3) -> (5, 2.9) node[below]{$\frac{1}{2}$};
            \draw[thick] (3.333, 3) -> (3.333, 2.9) node[below]{$\frac{1}{3}$};
            \draw[thick] (6.667, 3) -> (6.667, 2.9) node[below]{$\frac{2}{3}$};
            \draw[thick] (1.667, 3) -> (1.667, 2.9) node[below]{$\frac{1}{6}$};
            \draw[thick] (8.333, 3) -> (8.333, 2.9) node[below]{$\frac{5}{6}$};
            \draw[thick] (0, 2.9) -> (0, 2.9) node[below]{$0$};
            \draw[thick] (10, 3) -> (10, 2.9) node[below]{$1$};
            
            \draw[thick] (0.1, 3.5) -> (0, 3.5) node[left]{$n$};
            \draw[thick] (0.1, 7) -> (0, 7) node[left]{$2n$};
            \draw[thick] (0.1, 4.667) -> (0, 4.667) node[left]{$\frac{4}{3}n$};
            \draw[thick] (0.1, 5.833) -> (0, 5.833) node[left]{$\frac{5}{3}n$};
        \end{tikzpicture}
        \caption[Figure 1]{Lower bounds for all rationals with denominators smaller than $50$. Constant terms are ignored. This plot is identical to turning Thomae's function upside down\footnotemark.}
        \label{fig:thomae}
    \end{figure}
    \footnotetext{Also called the Riemann Function, see \url{https://en.wikipedia.org/wiki/Thomae\%27s_function}}

In Section~\ref{sec:UB}, we generalize the existence proof of \cite{stromquist1985sets} and build a reduction to the \CkD{k} problem for some specific choices of $\alpha$ and $k$. We, therefore, improve the previous general upper bound of $2n$ for any rational $\alpha$.

\begin{inftheorem}
    For a large subset $Q^*$ of rationals $\alpha = \ell / k$ (formally specified in Section~\ref{sec:UB}), $\frac{2(k-1)}{k} \cdot n$ cuts are enough in the worst case; for any other rational ratios, there is an upper bound strictly smaller than $2n$ by a gap linear in $n$.
\end{inftheorem}

Notice that the bound for any rationals in set $Q^*$ is almost tight.
\begin{remark*}
Stromquist and Woodall \cite{stromquist1985sets} also showed a lower bound of $2n-2$ cuts for some rational $\alpha$ and $n$. We clarify that their lower bounds for any rational $\alpha$ only work for constant-size $n$ (depending on $\alpha$).
In our paper, we treat ratio $\alpha$ as a fixed constant and all the bounds are asymptotic in $n$. Our upper bound shows that for any rational ratio $\alpha$, it is impossible to require $2n-2$ cuts in the worst case for arbitrarily large $n$.
\end{remark*}
% \jiawei{polish this.}

On the computational side, the reduction to \CkD{k} also reveals an interesting connection with the complexity classes \PPA-$k$.  The most commonly studied setting is that the minimum number of cuts (as a function of $n$ and $\alpha$) is given to make the solution always exists, i.e., to make the problem a total problem. We study the complexity in this setting for rationals in set $Q^*$, since we know the tight bound for them.

\begin{inftheorem}
    For a large subset $Q^*$ of rationals $\alpha = \ell / k$, finding a solution using $\frac{2(k-1)}{k} \cdot n$ cuts lies in \PPA-$k$ under Turing reductions. In particular, if $k = p^r$ for a prime $p$, the problem lies in \PPA-$p$. 
\end{inftheorem}
% \jiawei{explain our result is from a reduction from \CkD{k}, and implications. }

When more cuts are allowed, the problem should become easier. Here we show that as more cuts are allowed, the problem is contained in more and more of the complexity classes \PPA-$p$. 
% In more detail:

\begin{inftheorem}
    For any $\alpha \in (0,1)$ and any prime $p$, finding an inverse-polynomial approximate solution using $2(p-1)\cdot \frac{\lceil p/2 \rceil}{\lfloor p/2 \rfloor} \cdot n$ cuts is in \PPA-$p$.
\end{inftheorem}

For example, solving consensus-halving ($\alpha = 1/2$) with $8n$ cuts is in \PPA-$3$. This fills the blank of previous results in an intriguing way:
on the one side, consensus-halving (with inverse-polynomial approximation) remains \PPA-complete for $n + n^{1-\delta}$ cuts for any small constant $\delta$~\cite{filos2016hardness, filos2020consensus}; on the other side, Alon and Graur~\cite{AG20} showed that finding an inverse-polynomial approximate solution with $O(n\log n)$ cuts is in $\mathsf{P}$. 
Also, these are the first natural\footnote{By natural we mean there is no explicit circuit in the input of the problem.} problems in the intersection of multiple \PPA-$p$ classes. 
%\jiawei{Make more comments here.} 

Our results also yield further \PPA-$p$ containment for \CkD{k} as extra cuts are allowed.

\begin{inftheorem}
    For any prime $p$, solving \CkD{k} with $2(k-1)\cdot (p-1) \cdot \frac{\lceil p/2 \rceil}{\lfloor p/2 \rfloor} \cdot n$ cuts with inverse-polynomial error lies in \PPA-$p$.
\end{inftheorem}

In Section~\ref{sec:np_hard}, we study the hardest setting, i.e., a solution with a minimal number of cuts for each instance is required. Notice that this problem may not be a \TFNP\ problem anymore since there is no easy way to verify whether a solution does use a minimal number of cuts.

\begin{inftheorem}
    For any $\alpha \in (0,1)$, finding a solution using the minimum number of cuts is \NP-hard.
\end{inftheorem}

\subsection{Background, Related Work}\label{sec:bg}

We have mentioned consensus-halving as an important special case of the problem of interest in the present paper. Consensus-halving is the computational analog of the Hobby-Rice theorem \cite{hobby-rice1965}. It was shown to be \PPA-complete in \cite{filos2018consensus}; this result was subsequently strengthened to apply to the (two-thief) necklace-splitting problem \cite{filos2019complexity}. (Necklace-splitting is a discretized version of consensus-halving, and the extension to necklace-splitting required \PPA-hardness for inverse-polynomial additive error in the values of the two shares). This line of work on \PPA-completeness also highlights the close connection of the problem with the ham-sandwich theorem from topology, as well as the Borsuk-Ulam theorem. Further work has extended this to showing \PPA-completeness even for simple measures (unions of uniform distributions over just two sub-intervals). But at present, little is known about how much it helps if we allow ourselves more than $n$ cuts. % \jiawei{change here.}
\CkD{k}  is another natural generalization of consensus-halving, consisting of consensus division into $k\geq 2$ shares, all of equal value. Alon \cite{alon1987splitting} identified the number of cuts needed to achieve this (namely $(k-1)n$), in the context of the necklace-splitting problem, and its computational complexity is recently studied in \cite{filos2020topological}, in which context the classes \PPA-$k$ are also important.

Goldberg et al.~\cite{GoldbergHIMS20} and Segal-Halev~\cite{Segal-Halevi21} study versions of consensus-division where the ``cake'' being partitioned is not a line segment, but an unordered collection of items on which the agents have diverse valuations. Deligkas, Filos-Ratsikas, and Hollender~\cite{DFH22} study consensus-halving with a constant number of agents and more general valuation functions. The complexity of computing the exact solution of consensus-halving is considered in Deligkas et al.~\cite{DFMS21} and Batziou, Hansen, and H{\o}gh~\cite{BHH21}.

Most of the literature on cake-cutting is about the search for a {\em fair division} into pieces that get allocated to the agents (such that, for example, no agent values someone else's pieces more than his own), as opposed to consensus division, as considered here. In the context of fair division, there is an ``arbitrary proportions'' analog to the problem studied in this paper: Segal-Halevi \cite{SH19} and Crew et al. \cite{CrewNS20} have studied an analogous generalization of fair division in which each agent has a (non-negative fractional) claim on the cake, all claims summing to 1. In common with consensus division, it is found that in the more general case of unequal proportions, more cuts may be required than in the special case of equal proportions. Segal-Halevi shows via a simple construction that $2n-2$ cuts may be required (when proportions are equal, it is known that $n-1$ are sufficient).

There is a similar recent interest in considering unequal (or weighted) sharing, in the context of indivisible items. Here, envy-freeness is unachievable in general, and instead one considers envy-freeness subject to being able to remove a small number of items. This work addresses the performance of standard algorithms such as round-robin picking sequences, and whether various desiderata can be satisfied \cite{AzizMS20,ChakrabortyISZ21,ChakrabortySS21, AAB22}.

\section{Preliminaries}
In this section, we give detailed definitions of the problem and complexity classes of interest here.

The \CH\ problem involves a set of $n$ agents each of whom has a valuation function on a 1-dimensional line segment $A$ (here we set $A$ to be the unit interval $[0,1]$). Consider the problem of selecting $k$ ``cut points'' in $A$ that partition $A$ into $k+1$ pieces, then label each piece either ``positive'' or ``negative'' in such a way that each agent values the positive pieces equally to the negative ones. In 2003, Simmons and Su \cite{simmons2003consensus} showed that this can always be done for $k=n$; their proof applies the Borsuk-Ulam theorem and is a proof of existence analogous to Nash's famous existence proof of equilibrium points of games, proved using Brouwer's or Kakutani's fixed point theorem. Significantly, Borsuk-Ulam is the {\em undirected} version of Brouwer, and already from \cite{papadimitriou1994complexity} we know that it relates to \PPA. The \CH\ problem was shown to be \PPA-complete in \cite{filos2018consensus}. As detailed in Definition~\ref{def:prob}, we assume that valuations are presented as step functions using the logarithmic cost model of numbers.

Consensus-splitting in an arbitrary ratio $\alpha:1-\alpha$ is one of the open problems raised in \cite{simmons2003consensus}. Clearly, a single cut suffices for $n=1$, and for $n=2$ it remains the case that 2 cuts suffice (to see this, assume by rescaling so that agent 1's distribution is uniform, and consider sliding an interval of length $\alpha$ along the unit interval, keeping track of agent 2's value for the interval). For $n>2$, things get more complicated. For any real number $\alpha \in [0,1]$, we define the following variant of the \CH\ problem.

\begin{definition}[$\alpha$-{\sc Imbalanced-Consensus-Division}, $\alpha$-{\sc ICD}]\label{def:prob}
    \textbf{Input}: $\varepsilon>0$ and $n$ continuous probability measures $\mu_{1}, \ldots, \mu_{n}$ on $[0,1]$, representing the valuation function of each agent. We assume the probability measures are presented as piecewise constant functions on $[0, 1]$, i.e., step functions (explicitly given in the input).
     
    \textbf{Output}: A partition of the unit interval into two (not necessarily connected) subsets $A_+$ and $A_-$, such that for any $i \in [n]$, we have $\left| \mu_{i}\left(A_{+}\right) - \alpha \right| \leq \varepsilon$.
\end{definition}
It is easy to see that $\alpha$-ICD is equivalent to $(1-\alpha)$-ICD. Definition \ref{def:prob} is not quite complete, since we care about the number of cuts needed to make the partition, which in general is dependent on $\alpha$. The most commonly studied setting is when the number of cuts allowed, which is a function of $n$ and $\alpha$ in our case, is minimal to make the problem a total problem, i.e., the solution always exists. Without further specification, we solve $\alpha$-ICD in this setting. We also study the cases where more or fewer cuts are available in this paper, in which the number of cuts will be specified explicitly. 

As noted in \cite{tao2021}, piecewise-constant functions have been used in various previous works on cake-cutting and can approximate natural real-valued functions. Another advantage of piecewise-constant functions is that by the same argument as Theorem 5.2 of \cite{etessami2010complexity}, an exact solution ($\varepsilon=0$) of $\alpha$-ICD with rational $\alpha$ could be efficiently calculated from an approximated solution with inverse-exponential $\varepsilon$. Most of our results could be extended to additive valuation function under inverse-polynomial approximation.

In this paper, we describe each piecewise-constant function by a set of \emph{value blocks}. Each value block represents an interval on which the valuation function takes a constant value. Naturally, the total weights of value blocks add up to one for each agent. 

\Xcomment{For example, in Figure~\ref{fig:valueblock}, agent 1's valuation function (in red) has two value blocks, and they have weights $0.6, 0.4$ respectively.
% See Figure~\ref{fig:valueblock} for a visualization of an instance of $(1/3)$-ICD.

    \begin{figure}[h!]
        \centering
        \begin{tikzpicture}
            \filldraw[fill=red!60!white, draw=black] (0,0) rectangle (1.5,0.2);
            \draw (0.75, 0.25) node[above]{$\mathsf{0.4}$};
            
            \filldraw[fill=red!60!white, draw=black] (4,0) rectangle (6.25,0.2);
            \draw (4.8, 0.25) node[above]{$\mathsf{0.6}$};
            
            \filldraw[fill=yellow!60!white, draw=black] (1.2, -1.2) rectangle (3.2,-1.2 + 0.375);
            \draw (2.2, -1.2 + 0.45) node[above]{$\mathsf{1}$};
            
            \filldraw[fill=green!60!white, draw=black] (3,-2.4) rectangle (3.75, -2.4 + 0.2);
            \draw (3.375, -2.4 + 0.25) node[above]{$\mathsf{0.2}$};
            
            \filldraw[fill=green!60!white, draw=black] (6,-2.4) rectangle (8, -2.4 + 0.3);
            \draw (6.8 , -2.4 + 0.35) node[above]{$\mathsf{0.8}$};
            %\filldraw[fill=blue!60!white, draw=black] (2+\x+0, 0) rectangle (2+\x+0+1, 0.2) node[above left = 0.6]{$\mathsf{1/2}$};
            
            \draw[ultra thick, loosely dashed] (2.53, 1.0) -- (2.53, -3);
            \draw[ultra thick, loosely dashed] (5.24, 1.0) -- (5.24, -3);
            \draw[ultra thick, loosely dashed] (7.175, 1.0) -- (7.175, -3);
            
            \draw (2.53, 1) node[left]{$\mathbf{-}$};
            \draw (2.53, 1) node[right]{$\mathbf{+}$};
            
            \draw (5.24, 1) node[left]{$\mathbf{+}$};
            \draw (5.24, 1) node[right]{$\mathbf{-}$};
            
            \draw (7.175, 1) node[left]{$\mathbf{-}$};
            \draw (7.175, 1) node[right]{$\mathbf{+}$};
        \end{tikzpicture}
        \caption{A visualization of a $(1/3)$-ICD instance with $n=3$. Each color corresponds to a different agent.}
        \label{fig:valueblock}
    \end{figure}
}

If a subinterval of the partition belongs to $A_+$ or $A_-$, we say it has the label ``$+$'' or ``$-$'' respectively. We assume the label of each subinterval alternates after each cut without loss of generality (two consecutive subintervals having the same label can be merged).

We similarly define {\sc $\alpha$-Imbalanced-Necklace-Splitting} ($\alpha$-INS) problem for any \emph{rational} $\alpha$. %\jiawei{Formally define it here.}

\begin{definition}[$\alpha$-{\sc Imbalanced-Necklace-Splitting}, $\alpha$-{\sc INS}]\label{def:ins}
    % Let $\alpha = l/k$ in the simplest form.
    
    \textbf{Input}: An open necklace with $t$ beads, each of which has one of $n$ colors. There are $a_i$ beads of color $i$, where $a_i, \alpha \cdot a_i \in \mathbb{N}$ for any $i \in [n]$.
     
    \textbf{Output}: A partition of the necklace into two (not necessarily connected) pieces $A_+$ and $A_-$, such that for any $i \in [n]$, piece $A_+$ contains exactly $\alpha \cdot a_i$ beads of color $i$.
\end{definition}

\begin{theorem}[Essentially from Section 6 of~\cite{filos2018consensus}]\label{thm:INS2ICD}
    For any rational $\alpha$, there is a many-to-one reduction from $\alpha$-INS to $\alpha$-ICD, and vice versa. Moreover, the reductions in both directions preserve the number of agents/colors and the number of cuts.
\end{theorem}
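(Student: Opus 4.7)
The plan is to lift the equivalence between \CH\ and necklace-splitting from Section~6 of \cite{filos2018consensus} to the imbalanced setting, parameterising the construction by $\alpha = \ell/k$ and exploiting the hypothesis that $\alpha a_i \in \mathbb{N}$.

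For the forward direction $\alpha$-INS~$\to$~$\alpha$-ICD, I would partition $[0,1]$ into $t$ equal subintervals, one per bead in order, and define $\mu_i$ to have uniform density $t/a_i$ on every subinterval whose bead has color $i$ (so that $\mu_i([0,1])=1$). An INS partition cuts only at bead boundaries and places exactly $\alpha a_i$ color-$i$ beads in $A_+$, so copying the same cut positions and $\pm$ labels into the ICD instance gives $\mu_i(A_+) = \alpha a_i \cdot (1/a_i) = \alpha$ exactly. The number of cuts, the number of agents/colors, and the labels are all preserved verbatim, which is what the theorem demands.

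For the reverse direction $\alpha$-ICD~$\to$~$\alpha$-INS, I would discretise the step-function instance on a sufficiently fine common grid. Choose an integer $L$ divisible by $k$ and by the denominators of all block endpoints and all value levels, so that every value block aligns with the $L$-grid and $L \cdot \mu_i(B)$ is an integer for every block $B$. Replace each $L$-cell that lies inside a value block of agent $i$ by a bead of color $i$ (subdividing the cell when blocks of several agents overlap it), which produces $a_i = L \cdot \mu_i([0,1])$ and $\alpha a_i = \ell L \cdot \mu_i([0,1])/k \in \mathbb{N}$. The necklace has the same $n$ colors, and the ICD cuts translate to at most the same number of necklace cuts once rounded to bead boundaries.

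The main obstacle, and where I expect most of the work, is precisely that bead-boundary rounding. A solution of $\alpha$-ICD only needs $|\mu_i(A_+) - \alpha| \le \varepsilon$ and its cuts may sit strictly inside a bead's subinterval, whereas an INS solution must cut at boundaries. I would adapt the rounding scheme of \cite{filos2018consensus}: within each bead, pairs of interior cuts can be deleted (they only swap a constant-density sub-piece between $A_+$ and $A_-$), and a single remaining interior cut is pushed to the nearer boundary. Choosing $\varepsilon < 1/(2 \max_i a_i)$ makes the rounded value $\mu_i(A_+)$ lie within $1/(2 a_i)$ of $\alpha$, and since $\alpha a_i$ is an integer this forces $\mu_i(A_+) = \alpha$ exactly, giving a valid INS solution using no more cuts than the original. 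The only substantive change from the $\alpha = 1/2$ case of \cite{filos2018consensus} is tracking the factor $k$ in the choice of $L$ and $\varepsilon$, so the rest of the argument carries over unchanged.
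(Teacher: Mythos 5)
The paper does not actually prove Theorem~\ref{thm:INS2ICD}; it simply cites Section~6 of~\cite{filos2018consensus}, so there is no paper proof to compare against, and I am evaluating your sketch against what the cited construction needs to accomplish.

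Your $\alpha$-INS $\to$ $\alpha$-ICD instance map (each bead $\mapsto$ a uniform-density subinterval of length $1/t$) is the right one, and you correctly identify that the real work is rounding the returned ICD cuts (which may fall strictly inside beads) to bead boundaries. However, the error analysis is too optimistic. Deleting a pair of interior cuts inside a colour-$i$ bead can shift $\mu_i(A_+)$ by up to $1/a_i$, rounding a surviving interior cut shifts it by up to $1/(2a_i)$ more, and if several colour-$i$ beads contain interior cuts these shifts can accumulate across beads. So choosing $\varepsilon < 1/(2\max_i a_i)$ does not by itself force the rounded value to lie within $1/(2a_i)$ of $\alpha$; the claim needs an argument controlling the \emph{cumulative} per-colour discrepancy (for instance, by choosing rounding directions adaptively as one sweeps through the beads), not an independent per-bead bound. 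This is a gap, though a repairable one.

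The reverse direction has a more substantive problem. Placing one bead of colour $i$ per $L$-cell lying in a value block of agent $i$ gives $a_i = L \cdot (\text{total length of agent $i$'s blocks})$, which is not $L \cdot \mu_i([0,1]) = L$ unless every block has density $1$; you would instead need to place a number of colour-$i$ beads in each cell proportional to the (integer) value of $L\mu_i$ there. More importantly, the plan to ``subdivide the cell when blocks of several agents overlap it'' breaks the solution-recovery step. If an INS cut falls between beads of different colours inside a shared cell, it assigns \emph{different} fractions of that cell to $A_+$ for different agents; but any single ICD cut at a point of the cell gives \emph{every} agent the same fraction of that cell, because all densities are constant across the cell. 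Carrying the INS cut positions straight back to $[0,1]$ therefore does not produce a valid $\alpha$-ICD solution once blocks overlap, and this, together with keeping $L$ polynomial despite exponentially large denominators in the input, is precisely where the cited reduction has to do genuine work that your sketch does not address.

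Finally, a smaller point of bookkeeping: the rounding argument you present as an obstacle for ``ICD $\to$ INS'' is actually the solution-recovery step of the \emph{forward} reduction (INS $\to$ ICD), since it converts an ICD solution with arbitrary cuts into an INS solution at bead boundaries; the reverse direction's hard part is the one described above. Worth untangling before fleshing this out.
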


The complexity classes \PPA-$k$ are defined as follows \cite{papadimitriou1994complexity,hollender2021classes}. For any integer $k\geq 2$, \PPA-$k$ is the set of problems reducible in polynomial time to the problem {\sc Bipartite-mod}-$k$:

\begin{definition}\label{def:ppa-k}
(the problem {\sc Bipartite-mod}-$k$)
We are given a bipartite graph on the vertices $(0\times \{0,1\}^n, 1\times \{0,1\}^n$ represented concisely via a circuit $C$, that given as input a vertex in $0\times \{0,1\}^n$, outputs a set of $\leq k$ potential neighbours in $1\times \{0,1\}^n$, and vice versa. An edge $(u,v)$ is present provided that $v$ is one of the potential neighbors of $u$, and vice versa. Suppose that the number of neighbours of $0^{n+1}$ lies in $\{1,2,\ldots,k-1\}$. A solution consists of some other vertex having a degree in $\{1,2,\ldots,k-1\}$.
\end{definition}

The existence of at least one solution to any instance of {\sc Bipartite-mod}-$k$ follows from a modulo-$k$ counting argument. In the case of $k=2$ we have complexity class \PPA, in which the corresponding problem is called {\sc Leaf}, consisting of a concisely-represented undirected graph of degree $\leq 2$, in which the all-zeroes vector is a leaf (a degree-1 vertex), and the problem is the find another leaf of the graph. \PPA-$p$ is closed under Turing reductions for any prime $p$, while \PPA-$k$ for general $k$ (except for prime or power of prime) is believed to be not closed under Turing reductions~\cite{hollender2021classes,goos2019complexity}.% \jiawei{Add this here?}

\section{Lower Bound}\label{sec:LB}

We assume that all fractions discussed in this paper are written down as rational numbers $\ell/k$, where $\ell,k$ are coprime and $\ell < k$.

\begin{theorem}[Lower Bound]\label{thm:LB}
    For any rational number $\alpha = \ell/k \in [0,1]$, $\frac{2(k-1)}{k} \cdot n -O(1) $ cuts are necessary (in the worst case) for an exact solution of $\alpha$-ICD with $n$ agents. %for any irrational number $\alpha \in [0,1]$, $(2 - o(1)) \cdot n$ cuts are necessary (in the worst case).
\end{theorem}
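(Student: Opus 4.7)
The plan is to establish the lower bound by constructing, for each $n$, an explicit hard family of $n$-agent $\alpha$-ICD instances (with $\alpha=\ell/k$ in lowest terms) every exact solution of which must use at least $\frac{2(k-1)}{k}\cdot n - O(1)$ cuts. By Theorem~\ref{thm:INS2ICD}, it is equivalent---and convenient---to work in the discrete $\alpha$-INS setting and then translate back. The strategy is to assemble the instance from $m:=\lfloor n/k\rfloor$ independent ``hard blocks'' $J_k$, each one a $k$-agent subproblem supported on its own sub-interval of $[0,1]$ (the residual $n-mk<k$ agents are handled by trivial uniform valuations on a short final sub-interval, contributing only an $O(1)$ overhead).

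The key lemma I would prove is that the $k$-agent block $J_k$, considered in isolation, cannot be solved exactly using fewer than $2(k-1)$ cuts. The valuations of $J_k$ are step functions chosen in the spirit of the Stromquist--Woodall rigidity construction that gives the $2n-2$ bound for irrational $\alpha$, specialized so that the rational target $\alpha=\ell/k$ still forces a generic placement of cuts. Concretely, I would take the $k$ densities to be small, mutually ``interleaved'' bumps whose cumulative distribution functions take the value $\ell/k$ only at a discrete set of points deep inside each support, so that satisfying every ratio exactly is rigid; the alternation of $+/-$ labels between consecutive cuts then converts each of the $k$ rigid constraints into two cut requirements (one each for the $+$-mass and the $-$-mass near the support), yielding $2(k-1)$ as the minimum after absorbing the two ``free'' outer labels. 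Formally, the argument proceeds by assuming a solution with fewer cuts, computing each $\mu_i(A_+)$ as a signed alternating sum of values on the resulting intervals, and deriving a contradiction by a rank/dimension count on the linear system that $A_+$ must satisfy.

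Given the block lemma, Step 3 is essentially free. Since the $m$ copies of $J_k$ occupy pairwise disjoint sub-intervals, cuts placed inside one copy contribute nothing to the ratio constraints of agents in any other copy. Any valid global partition must therefore use at least $2(k-1)$ cuts inside every copy, so
\[
\text{(\# cuts)} \;\geq\; m\cdot 2(k-1) \;=\; 2(k-1)\left\lfloor \tfrac{n}{k}\right\rfloor \;\geq\; \tfrac{2(k-1)}{k}\cdot n - 2(k-1),
\]
which matches the claimed $\frac{2(k-1)}{k}\cdot n - O(1)$ bound. I expect the main obstacle to be the block lemma itself: showing that a $k$-agent instance demands exactly $2(k-1)$ cuts for \emph{this particular} rational $\alpha=\ell/k$ is more delicate than the irrational case, since Stromquist--Woodall's rigidity argument degrades as the denominator becomes commensurate with the number of agents. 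The technical heart of the proof will be designing densities whose value $\ell/k$ is ``as irrational as possible'' relative to the cut positions available, and proving via a combinatorial/algebraic invariant that fewer than $2(k-1)$ cuts leave at least one agent off-target by a positive amount.
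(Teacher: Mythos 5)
Your high-level scaffolding is right: the paper does build the instance from roughly $\lfloor n/k\rfloor$ disjoint copies of a $k$-agent block, absorbs the leftover $<k$ agents with trivial valuations, and only ever loses an additive $O(1)$. And you are right that the whole weight rests on a block lemma forcing about $2(k-1)$ cuts per block.

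The gap is that you have neither a construction of the block nor a working argument for the block lemma, and the devices you gesture at are not the ones that actually make it go. The paper's construction hinges on a number-theoretic ingredient you do not mention at all: take $\alpha_1=\ell_1/k_1$ and $\alpha_2=\ell_2/k_2$ to be the two fractions with denominators $<k$ that tightly sandwich $\alpha$ (Lemma~\ref{lem:frac2} gives $\ell=\ell_1+\ell_2$, $k=k_1+k_2$, and all three pairwise adjacent). The block then has exactly $k_1$ ``type-1'' agents, each with $k_2$ disjoint blocks of weight $1/k_2$, and $k_2$ ``type-2'' agents, each with $k_1$ disjoint blocks of weight $1/k_1$, all interleaved. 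The lower bound is then a purely combinatorial double-count (Observation~\ref{obs:lb}): a ``$+$'' interval intersecting $a$ type-2 blocks must fully contain $\geq a-1$ type-1 blocks, but each type-1 agent can have at most $\ell_2-1$ blocks fully inside $A_+$ (else its share exceeds $\alpha_2>\alpha$), while each type-2 agent must have at least $\ell_1+1$ blocks meeting $A_+$ (else its share is at most $\alpha_1<\alpha$). Summing and invoking $k=k_1+k_2$ forces $\geq k-1$ positive intervals. None of this is a ``rank/dimension count on a linear system'' or a perturbation of the Stromquist--Woodall rigidity argument; the Farey/Stern-Brocot sandwich of $\alpha$ is precisely what replaces irrationality and what your proposal is missing. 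Without identifying that idea (or some concrete substitute), the proposal is an outline rather than a proof.

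A small secondary inaccuracy: a single isolated block with disjoint supports does not quite need $2(k-1)$ cuts (the extreme ``$+$'' intervals can abut the ends, saving two cuts), so the per-block lemma is $2(k-1)-2$ cuts; the paper recovers the tight constant by running the same counting argument once on all $c$ concatenated copies, getting $c(k-1)$ positive intervals and hence $2c(k-1)-2$ cuts, and then adding one cut per leftover dummy agent. Your ``just multiply by $m$'' step is essentially fine in the asymptotics, but if you try to prove the block lemma in isolation you should expect the weaker constant.
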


\begin{remark*}
The $O(1)$ term in the lower bound is bounded by $k$, the denominator of $\alpha$.
\Xcomment{
\begin{itemize}
    \item % the $o(1)$ term in the bound for irrational numbers approaches $0$ as $n$ increases.
    % \item This lower bound implies that only the denominator of the fraction (written in simplest terms) decides the number of cuts needed.
    \item Let $f: [0,1] \rightarrow [0,1]$ be Thomae's function\footnote{Also called the Riemann Function, see \url{https://en.wikipedia.org/wiki/Thomae\%27s_function}}, i.e., 
    \begin{displaymath}
        f(x) = \left\{\begin{array}{ll}
            0 &  \textrm{if $x$ is irrational or $x \in \{0,1\}$} \\
            {1}/{k} & \textrm{if $x = {\ell}/{k}$, $\ell,k$ coprime}
        \end{array}
        \right.        
    \end{displaymath}
    The bound in Lemma~\ref{thm:LB} could then be reformulated with Thomae's function, i.e., $(2(1-f(\alpha)) \cdot n  -O(1))$ cuts are necessary for any rational $\alpha$.
\end{itemize}
}
\end{remark*}

Before introducing the construction of $\alpha$-ICD instances establishing this lower bound, Lemmas (\ref{lem:frac1},\ref{lem:frac2}) identify two simple properties of fractions. Their proofs are left in Appendix~\ref{sec:pf}.

Let $\alpha_1 \coloneqq \ell_1/k_1$, $\alpha_2 \coloneqq {\ell_2}/{k_2}$, $\alpha_1 < \alpha_2$ be two fractions. We say $\alpha_1$ and $\alpha_2$ are \emph{adjacent} if $\ell_2 \cdot k_1 - \ell_1 \cdot k_2 = 1$.

\begin{restatable}{lemma}{lemmaFracOne}
\label{lem:frac1}
    Let $\alpha_1 \coloneqq \ell_1/k_1$, $\alpha_2 \coloneqq {\ell_2}/{k_2}$, $\alpha_1 < \alpha_2$ be two \emph{adjacent} fractions. For any fraction $\alpha \coloneqq {\ell}/{k}, \alpha \in (\alpha_1, \alpha_2)$, we have $k \geq k_1 + k_2$.
\end{restatable}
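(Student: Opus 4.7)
The plan is to use the standard Farey/mediant argument. I would write the two positive differences over common denominators:
\[
\alpha - \alpha_1 \;=\; \frac{\ell k_1 - \ell_1 k}{k\,k_1}, \qquad \alpha_2 - \alpha \;=\; \frac{\ell_2 k - \ell k_2}{k\,k_2}.
\]
Since $\alpha_1 < \alpha < \alpha_2$ and the numerators $\ell k_1 - \ell_1 k$ and $\ell_2 k - \ell k_2$ are integers, each numerator is a strictly positive integer, hence at least $1$. Therefore
\[
\alpha - \alpha_1 \;\geq\; \frac{1}{k\,k_1}, \qquad \alpha_2 - \alpha \;\geq\; \frac{1}{k\,k_2}.
\]

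Next I would add these two inequalities and compare with the exact value of $\alpha_2 - \alpha_1$ supplied by the adjacency hypothesis. By definition, adjacency gives $\ell_2 k_1 - \ell_1 k_2 = 1$, so
\[
\alpha_2 - \alpha_1 \;=\; \frac{\ell_2 k_1 - \ell_1 k_2}{k_1 k_2} \;=\; \frac{1}{k_1 k_2}.
\]
Combining the two, $\frac{1}{k_1 k_2} \geq \frac{1}{k k_1} + \frac{1}{k k_2} = \frac{k_1+k_2}{k\, k_1 k_2}$, and clearing denominators yields $k \geq k_1 + k_2$, as desired.

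There is essentially no obstacle here: the only small thing to check is that the numerators $\ell k_1 - \ell_1 k$ and $\ell_2 k - \ell k_2$ are indeed \emph{strictly} positive integers, which is immediate from $\alpha_1 < \alpha < \alpha_2$ and the integrality of $\ell, k, \ell_1, k_1, \ell_2, k_2$. The statement is a classical fact about consecutive terms of a Farey sequence, and the proof above is the short route; no use of the coprimality of $\ell,k$ is needed beyond what is already implicit in the statement.
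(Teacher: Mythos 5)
Your proof is correct and uses essentially the same argument as the paper: split $\alpha_2 - \alpha_1 = (\alpha_2 - \alpha) + (\alpha - \alpha_1)$, lower-bound each summand by $1/(k\,k_i)$ via integrality of the cross-difference numerators, equate the left side to $1/(k_1k_2)$ by adjacency, and clear denominators. The paper just compresses this into a single displayed chain of (in)equalities.
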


\begin{remark*}
    It's easy to verify that by taking $\alpha \coloneqq (\ell_1 + \ell_2)/(k_1 + k_2)$, we have $\alpha \in (\alpha_1, \alpha_2)$, and $\alpha$ is \emph{adjacent} to both $\alpha_1, \alpha_2$.
\end{remark*}
    
\begin{restatable}{lemma}{lemmaFracTwo}
    \label{lem:frac2}
        Given a fraction $\alpha \coloneqq \ell/k, k \geq 2$, let $(\alpha_1, \alpha_2)$ be the smallest interval satisfying that $\alpha \in (\alpha_1, \alpha_2)$ and $k_1, k_2 < k$, where $\alpha_1 \coloneqq \ell_1/k_1, \alpha_2 \coloneqq {\ell_2}/{k_2}$. Then 
        \begin{enumerate}
            \item $\alpha_1$ and $\alpha_2$ are \emph{adjacent};
            \item $\ell = \ell_1 + \ell_2, k = k_1 + k_2$;
            \item $\alpha$ is adjacent to both $\alpha_1, \alpha_2$.
        \end{enumerate}
\end{restatable}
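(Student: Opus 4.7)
The plan is to recognise $(\alpha_1,\alpha_2)$ as a pair of consecutive elements of the Farey sequence $F_{k-1}:=\{p/q:\gcd(p,q)=1,\ q<k\}$ and then to use the mediant construction in tandem with Lemma~\ref{lem:frac1}. By the minimality of the interval, $\alpha_1$ is the largest element of $F_{k-1}$ lying below $\alpha$ and $\alpha_2$ is the smallest lying above $\alpha$. In particular, nothing in $F_{k-1}$ can sit strictly between $\alpha_1$ and $\alpha_2$, since any such fraction would lie on one side of $\alpha$ and so give a strictly smaller admissible interval around $\alpha$. Hence $\alpha_1$ and $\alpha_2$ are consecutive in $F_{k-1}$.

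For Part~1 I would invoke the classical Farey fact that consecutive $p/q<p'/q'$ in $F_n$ satisfy $p'q-pq'=1$, itself provable by induction on $n$ via mediant insertion (going from $F_{n-1}$ to $F_n$ amounts to inserting exactly those mediants whose denominator equals $n$). Applied to $\alpha_1,\alpha_2\in F_{k-1}$, this immediately yields $\ell_2 k_1-\ell_1 k_2=1$.

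For Parts~2 and~3, form the mediant $\gamma:=(\ell_1+\ell_2)/(k_1+k_2)$. The adjacency identity rewrites as $\ell_2(k_1+k_2)-k_2(\ell_1+\ell_2)=1$, so $\gcd(\ell_1+\ell_2,k_1+k_2)=1$, meaning $\gamma$ is already in lowest terms with denominator $k_1+k_2$; a short computation places $\gamma\in(\alpha_1,\alpha_2)$. If we had $k_1+k_2<k$, then $\gamma\in F_{k-1}$ would lie strictly between $\alpha_1$ and $\alpha_2$, contradicting their consecutivity in $F_{k-1}$; therefore $k_1+k_2\ge k$. Combined with Lemma~\ref{lem:frac1}, which now applies since Part~1 is established and supplies $k\ge k_1+k_2$, we obtain $k=k_1+k_2$. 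Showing $\alpha=\gamma$ then delivers $\ell=\ell_1+\ell_2$ (Part~2), and Part~3 follows from the one-line identity $\ell k_1-\ell_1 k=\ell_2 k_1-\ell_1 k_2=1$ together with its symmetric counterpart.

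The subtlest step is proving $\alpha=\gamma$, i.e.\ that no second reduced fraction of denominator $k$ also sits in $(\alpha_1,\alpha_2)$. The interval has length $1/(k_1 k_2)$ (from adjacency) whereas reduced fractions of denominator $k$ are at least $1/k$ apart, so two of them can coexist in the open interval only when $k_1 k_2<k=k_1+k_2$, i.e.\ $(k_1-1)(k_2-1)<1$, which forces $k_1=1$ or $k_2=1$. I would dispatch this edge case by hand: $k_1=1$ together with $\alpha\in(0,1)$ pins down $\alpha_1=0$ and, via adjacency, $\alpha_2=1/(k-1)$, after which a direct count confirms that $1/k=\gamma$ is the only reduced fraction of denominator $k$ in the open interval; the case $k_2=1$ is symmetric.
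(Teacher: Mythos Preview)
Your proposal is correct and follows essentially the same route as the paper: both arguments establish Part~1 by an induction (you frame it as the classical Farey-neighbour theorem, the paper carries out the induction on $k$ directly), and both obtain $k=k_1+k_2$ by combining Lemma~\ref{lem:frac1} with the observation that the mediant would otherwise sit in $F_{k-1}$ between $\alpha_1$ and $\alpha_2$.

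The one place the arguments diverge is the step $\ell=\ell_1+\ell_2$. You prove $\alpha=\gamma$ via the length bound $\alpha_2-\alpha_1=1/(k_1k_2)$ versus the spacing $1/k$ of denominator-$k$ fractions, which forces the separate treatment of $k_1=1$ or $k_2=1$. The paper avoids this case split: once $k=k_1+k_2$ is known, it simply notes that if $\ell>\ell_1+\ell_2$ then
\[
\frac{1}{k}\le\frac{\ell}{k}-\frac{\ell_1+\ell_2}{k}<\frac{\ell_2}{k_2}-\frac{\ell_1+\ell_2}{k}=\frac{\ell_2k_1-\ell_1k_2}{k_2k}=\frac{1}{k_2k},
\]
a contradiction, and symmetrically for $\ell<\ell_1+\ell_2$. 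This is a slightly cleaner finish than your edge-case analysis, though both are entirely valid.
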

    
    \begin{corollary}\label{cor:frac3}
        Define set $\mathbb{Q}_k \coloneqq \{a/b \in [0,1]: a,b \textrm{ coprime}, b \leq k \}$, i.e., all the fractions in $[0,1]$ with denominator smaller than or equal to $k$. 
        
        When elements in $\mathbb{Q}_k$ are listed in increasing order, every two consecutive elements are \emph{adjacent}.
    \end{corollary}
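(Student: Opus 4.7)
The plan is to induct on $k$. For the base case $k=1$, we have $\mathbb{Q}_1 = \{0/1, 1/1\}$, and the single consecutive pair satisfies $1\cdot 1 - 0 \cdot 1 = 1$, so it is adjacent by definition.

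For the inductive step, I would assume the claim for $\mathbb{Q}_{k-1}$ and examine what happens when we insert all fractions of denominator exactly $k$ to form $\mathbb{Q}_k$. For each new fraction $\alpha = \ell/k$, let $(\alpha_1, \alpha_2)$ with $\alpha_i = \ell_i/k_i$ be the pair of consecutive elements of $\mathbb{Q}_{k-1}$ bracketing $\alpha$. Since $k_1, k_2 \leq k-1 < k$ and there is no element of $\mathbb{Q}_{k-1}$ strictly between $\alpha_1$ and $\alpha_2$, this $(\alpha_1, \alpha_2)$ is precisely the smallest interval around $\alpha$ whose endpoints have denominator smaller than $k$, matching the hypothesis of Lemma~\ref{lem:frac2}. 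That lemma then yields $k_1 + k_2 = k$, $\ell_1 + \ell_2 = \ell$, and the adjacency of $\alpha$ to both $\alpha_1$ and $\alpha_2$.

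The identity $k_1 + k_2 = k$ also shows that at most one denominator-$k$ fraction can be inserted into any given gap $(\alpha_1, \alpha_2)$, since any such fraction is forced to equal $(\ell_1+\ell_2)/(k_1+k_2)$. Consequently the consecutive pairs of $\mathbb{Q}_k$ are of two types: pairs inherited unchanged from $\mathbb{Q}_{k-1}$ (which are adjacent by the inductive hypothesis) and pairs of the form $(\alpha_1, \alpha)$, $(\alpha, \alpha_2)$ created by an insertion (which are adjacent by Lemma~\ref{lem:frac2}). Either way, every consecutive pair in $\mathbb{Q}_k$ is adjacent.

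The only real obstacle is confirming that the interval $(\alpha_1, \alpha_2)$ arising in the inductive step truly is the \emph{smallest} interval around $\alpha$ with both endpoints of denominator $<k$, so that Lemma~\ref{lem:frac2} applies verbatim; this reduces to the observation that any fraction strictly between $\alpha_1$ and $\alpha_2$ with denominator $\leq k-1$ would have already appeared in $\mathbb{Q}_{k-1}$, contradicting consecutiveness. Once this matching of hypotheses is noted, the conclusion is immediate.
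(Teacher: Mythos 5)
Your proof is correct, and it takes a genuinely different route from the one the paper sketches. The paper's (unpublished, commented-out) plan is a direct, non-inductive argument: take any consecutive pair $\alpha_1 < \alpha_2$ in $\mathbb{Q}_k$, assume WLOG that $\alpha_1$ has the larger denominator, apply Lemma~\ref{lem:frac2} to $\alpha_1$ to get the bracketing pair $(\beta_1,\beta_2)$ of fractions with smaller denominator, and then argue that $\beta_2$ must equal $\alpha_2$ since $\beta_2 \in \mathbb{Q}_k$ and no element of $\mathbb{Q}_k$ can lie strictly between $\alpha_1$ and $\alpha_2$. That approach has to deal with a small case analysis (e.g.\ ruling out $k_1 = k_2 \geq 2$) to make the minimality argument close cleanly. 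Your inductive, bottom-up approach --- building $\mathbb{Q}_k$ from $\mathbb{Q}_{k-1}$ by inserting all lowest-terms fractions of denominator exactly $k$ --- sidesteps that case analysis entirely: the consecutiveness hypothesis in $\mathbb{Q}_{k-1}$ hands you exactly the ``smallest interval'' hypothesis of Lemma~\ref{lem:frac2} for free, and the identity $k = k_1 + k_2$ immediately shows that each gap admits at most one new fraction, so every consecutive pair in $\mathbb{Q}_k$ is either inherited (use the inductive hypothesis) or freshly created by an insertion (use Lemma~\ref{lem:frac2}, part 3). The induction is arguably the cleaner formulation, and it makes the underlying Stern--Brocot / Farey-sequence structure transparent, whereas the paper's direct argument is shorter once one accepts the WLOG step but requires a bit more care to verify. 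Both ultimately rest on the same tool, Lemma~\ref{lem:frac2}.
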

    
    % proof: Take any element, and then find its neighbor. Apply the lemma to the one with the larger denominator.
    
    Now we are ready to introduce the construction of the $\alpha$-ICD instances matching the bound.

\begin{proof}[Proof of Theorem~\ref{thm:LB}]
We first fix a rational number $\alpha \coloneqq \ell/k$. Let $(\alpha_1, \alpha_2)$ be the smallest interval satisfying that $\alpha \in (\alpha_1, \alpha_2)$ and $k_1, k_2 < k$, where $\alpha_1 \coloneqq \ell_1/k_1, \alpha_2 \coloneqq {\ell_2}/{k_2}$. (So, Lemma \ref{lem:frac2} applies.)

% \jiawei{explain what is value block!}
        
    We construct an $\alpha$-ICD instance with $k_1+k_2$ agents, partitioned into two types:
    \begin{description}
        \item[Type-1] There are $k_1$ type-1 agents. A type-1 agent has $k_2$ value blocks with ${1}/{k_2}$ weight each;
        \item[Type-2] There are $k_2$ type-2 agents. A type-2 agent has $k_1$ value blocks with ${1}/{k_1}$ weight each.
    \end{description}

    There are in total $2k_1\cdot k_2$ value blocks, half of which belong to type-1 agents and the other half to type-2 agents. The arrangement of these value blocks is specified by the following rules, see Figure~\ref{fig:LB25} for a visualization:
    \begin{itemize}
        \item all the value blocks are disjoint;
        \item all the value blocks from the $i$-th type-1 (respectively, type-2) agent are on the left of any value blocks from the $(i+1)$-th type-1 (respectively, type-2) agent;
        \item from left to right, the value blocks from type-1 and type-2 agents occur in turn; the first block comes from the first type-2 agent.
    \end{itemize}
    
    \begin{figure}[h!]
        \begin{tikzpicture}%[scale = 0.9]
            % 0, 4, 8
            %\foreach \x / \col in {0 / red, 2.5 / green}
                %\filldraw[fill= \y, draw=black] (2+\x+0, 0) rectangle (2+\x+0+1, 0.2) node[anchor=south]{$1/2$};
            \foreach \x in {0,2.2}
                \filldraw[fill=blue!60!white, draw=black] (2+\x+0, 0) rectangle (2+\x+0+1, 0.2) node[above left = 0.6]{$\mathsf{1/2}$};
            \foreach \x in {0,2.2}
                \filldraw[fill=yellow!60!white, draw=black] (2+\x+4.4, 0) rectangle (2+\x+ 4.4 +1, 0.2) node[above left = 0.6]{$\mathsf{1/2}$};
            \foreach \x in {0,2.2}
                \filldraw[fill=red, draw=black] (2+\x+8.8,0) rectangle (2+\x+8.8 +1,0.2) node[above left = 0.6]{$\mathsf{1/2}$};
                
            \draw (1.62, -0.1) node[above left = 0.0]{\textbf{Type-1:}};
            \draw (1.62, -1.6) node[above left = 0.0]{\textbf{Type-2:}};
            
            \foreach \x in {1.26, 3.46, 5.66}
            {
                \filldraw[fill=green!60!white, draw=black] (2+\x,-1.5) rectangle (2+\x +0.67 ,-1.3) ;
                \draw (2+\x +0.77 ,-1.3) node[above left]{$\mathsf{1/3}$};
            }
            
            \foreach \x in {7.86, 10.06, 12.26}
            {
                \filldraw[fill=pink, draw=black] (2+\x,-1.5) rectangle (2+\x +0.67 ,-1.3);
                \draw (2+\x +0.77 ,-1.3) node[above left]{$\mathsf{1/3}$};
            }
        \end{tikzpicture}
        \caption{The arrangement of value blocks in a $(2/5)$-ICD instance. There are three type-1 agents and two type-2 agents. Each color corresponds to a different agent. Note that $\alpha_1 = 1/3 < 2/5 < 1/2 = \alpha_2$. }
        \label{fig:LB25}
    \end{figure}
    
    \Xcomment{
        \jiawei{Use TIKZ. Maybe change to an example for $\alpha = 2/5$?}
        \begin{figure}[h!]
            \centering            \includegraphics[scale=\imagesize]{Figure1.pdf}
            % \captionsetup{singlelinecheck=off}
            \caption[]{The arrangement of value blocks in the $\alpha$-ICD instance. Each color corresponds to a different agent. There are $k_1$ type-1 agents on the top row and $k_2$ type-2 agents on the second row.}
            \label{fig:lower_bound_2}
        \end{figure}
    }
        
To lower-bound the number of cuts needed for this instance, we start with the following observations, recalling that $A_{+}$ is supposed to have measure $\alpha$:
    \begin{observation}\label{obs:lb}
    \begin{enumerate}
        \item For each type-1 agent, at most $\ell_2 - 1$ of its value blocks are \emph{entirely} included in the set $A_{+}$. Otherwise, the set $A_{+}$ will contain at least $\ell_2 / k_2 = \alpha_2$ fraction of measure for that agent, which is strictly greater than $\alpha$.
        \item For each type-2 agent, at least $\ell_1 + 1$ of its value blocks have positive measure in the set $A_{+}$. Otherwise, the set $A_{+}$ will contain at most $\ell_1 / k_1 = \alpha_1$ fraction of measure for that agent, which is strictly smaller than $\alpha$.
    \end{enumerate}
    \end{observation}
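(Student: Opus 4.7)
The plan is to prove each part by a direct pigeonhole argument against the measure constraint $\mu_i(A_+) = \alpha$, leveraging the fact that $\alpha_1 < \alpha < \alpha_2$ by the choice of $(\alpha_1, \alpha_2)$ in Lemma~\ref{lem:frac2}.

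For the first part (type-1 agents), I would suppose for contradiction that some type-1 agent has $j \geq \ell_2$ value blocks entirely contained in $A_+$. Since each of its $k_2$ blocks has weight $1/k_2$, the contribution of those $j$ fully-contained blocks alone forces
$$\mu_i(A_+) \;\geq\; \frac{j}{k_2} \;\geq\; \frac{\ell_2}{k_2} \;=\; \alpha_2.$$
But $\alpha < \alpha_2$ by the bracketing, so $\mu_i(A_+) > \alpha$, contradicting $\mu_i(A_+) = \alpha$. Hence $j \leq \ell_2 - 1$.

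For the second part (type-2 agents), I would similarly suppose for contradiction that some type-2 agent has $j \leq \ell_1$ value blocks with positive measure in $A_+$. Since each of its $k_1$ blocks has weight $1/k_1$ and the remaining $k_1 - j$ blocks contribute zero, the total contribution is at most
$$\mu_i(A_+) \;\leq\; \frac{j}{k_1} \;\leq\; \frac{\ell_1}{k_1} \;=\; \alpha_1.$$
Since $\alpha > \alpha_1$, this gives $\mu_i(A_+) < \alpha$, again contradicting the measure constraint. Therefore at least $\ell_1 + 1$ blocks must have positive $A_+$-measure.

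There is essentially no obstacle here; both parts reduce to the elementary facts that $\ell_2$ blocks of weight $1/k_2$ account for total weight exactly $\alpha_2$ (and $\ell_1$ blocks of weight $1/k_1$ for $\alpha_1$), coupled with the strict bracketing $\alpha_1 < \alpha < \alpha_2$ established by the choice of $(\alpha_1, \alpha_2)$ as the minimal-denominator bracket around $\alpha$. The harder work was done already in Lemma~\ref{lem:frac2}; this observation will then serve as the main lever in the block-counting argument in the proof of Theorem~\ref{thm:LB}.
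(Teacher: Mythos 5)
Your proof is correct and is exactly the argument the paper embeds in the ``Otherwise\dots'' clauses of the observation itself: each part reduces to the fact that $\ell_2$ blocks of weight $1/k_2$ already total $\alpha_2 > \alpha$, and at most $\ell_1$ blocks of weight $1/k_1$ total at most $\alpha_1 < \alpha$, contradicting $\mu_i(A_+) = \alpha$. You have simply written it out a bit more explicitly; there is no difference in approach.
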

    
    Now assuming in a valid solution, there are $t$ intervals labeled with ``$+$'', denoted by $I_1, \ldots, I_t$. For each $i\in[t]$ define $a_i$ to be the number of value blocks from type-2 agents that are intersected with the $i$-th interval $I_i$. Then from our construction, it follows that there are at least $a_i-1$ of value blocks from type-1 agents that are \emph{entirely} included in the interval $I_i$.
        
    We have the following constraints regarding the observation above:
        
    \begin{equation}\label{equ:type1}
        \sum_{i=1}^t \max\{a_i - 1,0\} \leq (\ell_2 -1) \cdot k_1;
    \end{equation}
        
    \begin{equation}\label{equ:type2}
        \sum_{i=1}^t a_i \geq (\ell_1 + 1) \cdot k_2, 
    \end{equation}
    where the LHS of inequality~(\ref{equ:type1}) lower-bounds the total number of value blocks from type-1 agents that are \emph{entirely} included in the set $A_{+}$; the LHS of the inequality~(\ref{equ:type2}) is the total number of value blocks from type-2 agents that are intersected with the set $A_{+}$.
        
    Reformulating inequality~(\ref{equ:type1}) as 
    $$ \sum_{i=1}^t a_i \leq \sum_{i=1}^t \max\{a_i - 1,0\} + t \leq (\ell_2 -1) \cdot k_1 + t,$$
    and combining it with inequality~(\ref{equ:type2}), we know that both inequalities could be satisfied only when
    $$t \geq (\ell_1 + 1) \cdot k_2 - (\ell_2-1) \cdot k_1 = k_1 + k_2 - 1 = k-1,$$
    where the last two equalities come from Lemma~\ref{lem:frac2}. Notice that $k-1$ intervals need at least $2(k - 1) -2$ cuts to separate them. Therefore, we prove that this $\alpha$-ICD instance, which has $k_1 + k_2 = k$ agents, needs at least $2(k - 1) -2$ cuts.

    We prove the lower bound for arbitrary $n$ by first copying the instance above for $c = \lfloor n/k\rfloor$ times. The first value block of the next copy is on the right of the last value block of the previous copy. By a similar argument, this new instance needs at least $c \cdot t$ intervals labeled with ``$+$''. Thus, $2\cdot c \cdot t - 2 = 2c \cdot (k-1) -2$ cuts are needed for these $c$ copies with $c\cdot k$ agents. 
    Finally, inserting $n - c\cdot k$ dummy agents with non-overlapping value blocks if $k$ is not a factor of $n$. Each dummy agent needs at least one cut and the total number of cuts needed are $$2c \cdot (k-1) -2 + (n - c\cdot k) \geq \frac{2(k-1)}{k} \cdot n - k \; .$$
    This concludes the lower bound for any rational number $\alpha$.

    \Xcomment{
        Notice that Observation~\ref{obs:lb} actually holds for any real number $\alpha' \in (\alpha_1, \alpha_2)$. Thus, for any irrational number $\alpha$, we need to show that for arbitrary $k$, we could find interval $(\alpha_1, \alpha_2)$ satisfying that $\alpha \in (\alpha_1, \alpha_2)$, $\alpha_1$ and $\alpha_2$ are \emph{adjacent}, and the sum of the denominators of $\alpha_1$ and $\alpha_2$ are greater than $k$.
    
        Such an interval can be found in two steps. First, let $\delta = \min \{\left|{\ell_1}/{k_1} - \alpha\right|: k_1 \leq k \}$. Then, consider the smallest interval $(\alpha_1, \alpha_2)$ satisfying that $\alpha \in (\alpha_1, \alpha_2)$ and $k_1, k_2 \leq 2/\delta$, where $\alpha_1 \coloneqq \ell_1/k_1, \alpha_2 \coloneqq {\ell_2}/{k_2}$. By definition, we have $|\alpha_1 - \alpha| < \delta$ and $|\alpha_2 - \alpha| < \delta$, which implies that $k_1, k_2 > k$; also by Corollary~\ref{cor:frac3}, we know that $\alpha_1, \alpha_2$ are \emph{adjacent}.
    }
    \end{proof}
    
    \begin{remark*}
        The proof suggests that the lower bound also applies to the $\varepsilon$-approximate version of $\alpha$-ICD with rational $\alpha$ for a sufficiently small constant $\varepsilon$. Precisely, let $(\alpha_1, \alpha_2)$ be the smallest interval satisfying that $\alpha \in (\alpha_1, \alpha_2)$ and $k_1, k_2 < k$, where $k, k_1, k_2$ are the denominators of $\alpha, \alpha_1, \alpha_2$ respectively. The lower bound still holds as long as $\varepsilon < \min \{ \alpha-\alpha_1, \alpha_2 - \alpha \}$.
    \end{remark*}
    
    \Xcomment{
        In the construction above, the measure of each value block is a fraction with a constant size denominator, and there are no overlaps between value blocks. Therefore, the same lower bound also applies to the {\sc $\alpha$-Imbalanced-Necklace-Splitting}.
        
        \begin{corollary}
            For any rational number $\alpha = \ell/k \in [0,1]$, $\frac{2(k-1)}{k} \cdot n -O(1) $ cuts are necessary (in the worst case) for a solution of $\alpha$-INS with $n$ colors. %for any irrational number $\alpha \in [0,1]$, $(2 - o(1)) \cdot n$ cuts are necessary (in the worst case).
        \end{corollary}
    }

\section{Upper Bound and Implications on Complexity}\label{sec:UB}

We generalize the technique from \cite{stromquist1985sets} to improve the upper bounds for rational ratios $\alpha$. 
    
We define the set $Q^* \subset [0,1]$ by the following generating rules:
    \begin{enumerate}
        \item $0, 1 \in Q^*$;
        \item If $\frac{\ell}{k} \in Q^*$, then for any prime $p$ that is not a factor of $\ell$, we have $\frac{\ell}{k\cdot p}, 1- \frac{\ell}{k\cdot p} \in Q^*$.
    \end{enumerate}

Notice that if $k$ is prime, $\ell/k$ belongs to $Q^*$ only when $\ell\in\{1,k-1\}$. For other values of $k$, there are still usually some fractions $\ell/k$ missing from $Q^*$, for example, $4/9$.
    
We show an upper bound for the set of ratios $\alpha$ in $Q^*$ that differs from the corresponding lower bound of Theorem~\ref{thm:LB} by at most $k$. For any other rational number $\alpha$, we give an upper bound that is smaller than $2n$ by a margin linear in $n$ and thus separate the case of rational and irrational numbers $\alpha$.

    \begin{theorem}[Upper Bound]\label{thm:UB}
        For getting an exact solution of an $\alpha$-ICD instance with $n$ agents,
        \begin{enumerate}
            \item if $\alpha$ is rational and $\alpha = \ell/k \in Q^*$, $\frac{2(k-1)}{k} \cdot n$ cuts are always sufficient;
            \item if $\alpha$ is rational and $\alpha \notin Q^*$, there is a constant $c_{\alpha} > 0$ such that $(2-c_{\alpha}) \cdot n$ cuts are always sufficient;
            \item if $\alpha$ is irrational, $2n$ cuts are always sufficient.
        \end{enumerate}
    \end{theorem}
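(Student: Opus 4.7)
The plan is to handle the three parts separately, with Part 3 (irrational $\alpha$) immediate from the Stromquist--Woodall $2n$-cut existence proof \cite{stromquist1985sets}, which requires no rationality assumption and reduces to a $\mathbb{Z}_2$-Borsuk--Ulam argument on the $2n$-sphere of cut sequences under the label-swap antipode. The novel content lies in Parts 1 and 2, where rationality is exploited through a reduction to \CkD{p} for appropriate primes $p$ dictated by the generating rule of $Q^*$.

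For Part 1, I would induct on the structural generation of $Q^*$. The base case $\alpha \in \{0,1\}$ is trivial (zero cuts). For the inductive step, $\alpha = \ell/k \in Q^*$ with $k > 1$ is produced either from some $\alpha_0 = \ell/k_0 \in Q^*$ via $\alpha = \ell/(k_0 p)$ with $p$ a prime such that $p \nmid \ell$, or as the complement of such; the complement case reduces to the forward case via the symmetry $\alpha$-ICD $\equiv (1-\alpha)$-ICD. The forward case proceeds by a two-step reduction: first, invoke a solution to \CkD{p} on the $n$ input agents to partition $[0,1]$ into $p$ color classes of equal per-agent measure $1/p$, using at most $(p-1)n$ cuts; second, select one color class $I$, view its constituent pieces as a single concatenated interval equipped with the renormalized restricted measures, and apply the inductive hypothesis to produce an $\alpha_0$-ICD solution $B \subseteq I$ satisfying $\mu_i(B) = \alpha_0 \cdot (1/p) = \alpha$. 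Setting $A_+ = B$ yields a valid $\alpha$-ICD solution, and a careful transition-counting argument, combined with averaging over the $p$ choices of color class to select the one minimizing boundary contributions to $A_+$, is designed to deliver the claimed bound of $\frac{2(k-1)}{k} n$ cuts.

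For Part 2 ($\alpha = \ell/k \notin Q^*$), the same reduction template is applied once non-recursively: pick any prime $p \mid k$, execute the \CkD{p} step, and bound the recursive sub-instance by the generic $2n$-cut bound of Part 3. The outer reduction produces an improvement linear in $n$, giving $(2 - c_\alpha) n$ cuts for some $c_\alpha > 0$ depending on $k$ but not on $n$. The main obstacle I expect is the tight cut-accounting in Part 1: cuts from the \CkD{p} stage and cuts from the recursive $\alpha_0$-ICD stage interact on the boundary of the chosen color class $I$, and making the total match exactly $\frac{2(k-1)}{k} n$ rather than a strictly weaker bound requires either a delicate averaging over color-class choices or a unified topological reformulation along the lines of the Dold-type $\mathbb{Z}_p$-equivariant Borsuk--Ulam arguments used for \CkD{k} in \cite{filos2020topological}. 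Handling the $O(1)$ slack relative to Theorem~\ref{thm:LB} and the degenerate cases where $\ell \in \{0,k\}$ or where a color class has very few pieces are the remaining technical wrinkles.
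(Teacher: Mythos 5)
Your high-level reduction (use \CkD{p} to pass between $\alpha_0 = \ell/k_0 \in Q^*$ and $\alpha = \ell/(k_0 p)$) is the right mechanism, but the \emph{order} in which you compose the two steps is reversed relative to the paper, and that reversal is where your cut accounting breaks down. You propose: first run \CkD{p} on the whole interval, pick a color class $I$ with the fewest pieces ($\le \frac{(p-1)n}{p}$ arcs), then solve $\alpha_0$-ICD on the concatenation of $I$. When you pull the resulting $\frac{k_0-1}{k_0}n$ ``$+$'' intervals back into $[0,1]$, each internal boundary of $I$ can split one, so the number of ``$+$'' arcs becomes roughly $\frac{k_0-1}{k_0}n + \frac{(p-1)n}{p}$, giving a leading coefficient $2 - \frac{1}{k_0} - \frac{1}{p}$, which exceeds the target $1 - \frac{1}{k_0 p}$ precisely when $(k_0-1)(p-1) > 0$, i.e.\ always in the nontrivial range. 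So your composition strictly overshoots the bound, and the ``delicate averaging over color-class choices'' you gesture at cannot close this gap, because even the best color class is forced to have $\frac{(p-1)n}{p}$ arcs on average.

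The paper composes in the opposite order, and that is the crucial idea you're missing. Work on $S^1$ to avoid boundary effects, \emph{first} solve $\alpha_0$-ICD (inductively, $\frac{k_0-1}{k_0}n$ ``$+$'' arcs), \emph{then} apply \CkD{p} to the measures restricted to those ``$+$'' arcs (costing $(p-1)n$ additional cuts, so the ``$+$'' region is now carved into $\frac{k_0-1}{k_0}n + (p-1)n$ arcs across $p$ sub-labels), and only then average over the $p$ sub-labels. The minimizing sub-label has at most $\bigl(\frac{k_0-1}{k_0} + p-1\bigr)\frac{n}{p} = \frac{k_0 p - 1}{k_0 p}n$ arcs, matching the claim exactly. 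The averaging takes place over the \emph{sub-labels of the final $p$-division}, not over the color classes of an initial $p$-division; this makes the fractional bookkeeping telescope.

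Your Part 2 sketch also doesn't survive scrutiny: picking an arbitrary prime $p \mid k$ and writing $\alpha = \frac{\ell}{(k/p)\cdot p}$ requires $\ell < k/p$ to yield a valid sub-ratio, which fails for e.g.\ $\alpha = 4/9$, $p=3$. The paper instead builds a decreasing sequence $\alpha_i = (k_{i-1} \bmod \ell_{i-1})/k_{i-1}$ (terminating in $Q^*$ since numerators strictly decrease), and at each step uses the complementary ratio $1-\alpha_i = \frac{d\,\ell_{i-1}}{k_{i-1}}$ so that a \CkD{d} step on the ``$+$'' region of a $(1-\alpha_i)$-ICD solution recovers $\alpha_{i-1}$; the constant $c_{\alpha}$ emerges by dividing the current slack by $d$ at each step. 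This Euclidean-style descent is what guarantees the multiplicative gap below $2n$, and there is no single-step shortcut of the kind you propose.
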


    \begin{proof}
        Since \cite{stromquist1985sets} provided a general upper bound of $2n$ cuts for $\alpha$-ICD with any ratio $\alpha \in [0,1]$, it remains to consider the case with rational $\alpha$. 
        
        For technical simplicity, we assume the measure function is defined on the circle $S^1$, parameterized by $[0,1]$ where the points $0$ and $1$ are identified as the same point.
        Any solution to the $\alpha$-ICD defined on $S^1$ is also a valid solution to the corresponding $\alpha$-ICD defined on the interval $[0,1]$. Thus, any upper bound for the $S^1$ case is also valid for the interval case.
        
        \paragraph*{Case 1: $\alpha \in Q^*$.}
        
        We proceed by induction following the generating rule of $Q^*$.
        
        \emph{Base Case:} $0$ cuts are needed when $\alpha = 0$ or $\alpha = 1$.
        
        \emph{Induction Step:} Assume we already have that for $\alpha = \frac{\ell}{k} \in Q^*$, $ \frac{2(k-1)}{k} \cdot n$ cuts are enough. In other words, there is a solution with at most $\frac{k-1}{k} \cdot n $ intervals of ``$+$'' label.
        
        For any prime $p$ that is not a factor of $\ell$, let $\alpha' = \frac{\ell}{k\cdot p}$. To get a solution for the $\alpha'$-ICD instance, we first get a solution for the $\alpha$-ICD with the same set of measure functions. We then solve the \CkD{p} problem, which uses $(p-1)\cdot n$ cuts, on the intervals with ``$+$'' labels in the first step. 
        
        Now, $\frac{k-1}{k} \cdot n $ intervals of ``$+$'' label are divided into $(\frac{k-1}{k} + p-1) \cdot n$ intervals with label $+_1, \ldots, +_p$. Intervals of each label take $\alpha / p = \alpha'$ fraction of measure and thus correspond to a possible solution for the $\alpha'$-ICD instance. By averaging principle, there must be a label $+_i$ that labels at most $((\frac{k-1}{k} + p-1)  \cdot n) / p$ intervals. Observing that $$((\frac{k-1}{k} + p-1)  \cdot n) / p = \frac{(k-1) + (p-1) \cdot k}{k\cdot p} \cdot n = \frac{k \cdot p -1}{k \cdot p} \cdot n,$$ there is a solution to the $\alpha'$-ICD instance using at most $\frac{2(k \cdot p -1)}{k \cdot p} \cdot n$ cuts (two cuts for each interval with label $+_i$). Finally, $(1-\alpha')$-ICD is equivalent to $\alpha'$-ICD. This concludes the proof for any ratio $\alpha$ in the set $Q^*$.
        
        \paragraph*{Case 2: $\alpha \notin Q^*$.}
        
        Let $\alpha_1 = \alpha = \ell_1 / k_1 \notin Q^*$. We recursively generate a sequence of ratios as follows:
        \begin{enumerate}
            \item take $\alpha_i = (k_{i-1}\; \textrm{mod} \; \ell_{i-1}) / k_{i-1}$;  let $\ell_i$ and $k_i$ be the numerator and the denominator of $\alpha_i$ in the simplest term;
            \item if $\alpha_i \in Q^*$, take $t = i$ and stop this process.
        \end{enumerate}
        
        Since $\ell_i < \ell_{i-1}$, the process above will stop in finite steps.
        We now prove by induction in the reverse order of index $i$.
        
        \emph{Base Case:} Since $\alpha_t \in Q^*$, there is a constant $c_{\alpha_t} = 2/k_t > 0$ such that $(2-c_{\alpha_t})\cdot n$ cuts are always sufficient for $\alpha_t$-ICD.
        
        \emph{Induction Step:} Assume we already know that $(2-c_{\alpha_i})\cdot n$ cuts are always sufficient for $\alpha_i$-ICD with a constant $c_{\alpha_i} > 0$. The same constant also holds for $(1-\alpha_i)$-ICD.
        
        Notice that $1-\alpha_i = (k_i - \ell_i) / k_i = (d\cdot \ell_{i-1})/k_{i-1}$ for some positive integer $d$. Therefore, $\alpha_{i-1}$-ICD could be solved by first finding a solution of the $(1-\alpha_i)$-ICD instance with the same set of measure functions, and then solving the \CkD{d} problem on the previous solution set. 
        
        We use a similar argument in case 1 to count the number of cuts needed. The intervals labeled with ``$+$'' in the solution of $(1-\alpha_i)$-ICD is $(1-c_{\alpha_i}/2) \cdot n$. Solving \CkD{d} incurs $(d-1)\cdot n$ cuts and divides the previous solution set into $(d - c_{\alpha_i}/2) \cdot n$ intervals, with labels ranging from $+_1, \ldots, +_d$. Each label corresponds to a valid solution for $\alpha_{i-1}$-ICD, and there must be a label $+_i$ that labels at most $(1-c_{\alpha_i}/2d)\cdot n$ intervals. Therefore, there is a constant $c_{\alpha_{i-1}} = c_{\alpha_i}/d > 0$ such that $(2-c_{\alpha_{i-1}})\cdot n$ cuts are always sufficient for $\alpha_{i-1}$-ICD.
    \end{proof}

The proof of Theorem~\ref{thm:UB} is essentially a reduction from $\alpha$-ICD to \CkD{k}, and we thus derive several results on the complexity of $\alpha$-ICD.

    % \jiawei{Make a note that PPA-$p$ is closed, while PPA-$k$ may not!}
    \begin{theorem}\label{thm:complexity_minimum_cuts}
        For any $\alpha = \ell / k \in Q^*, k \geq 2$, solving exact $\alpha$-ICD with $\lfloor \frac{2(k-1)}{k} \cdot n \rfloor$ cuts is in \PPA-$k$ under Turing reductions. In particular, if $k=p^r$ for a prime $p$, the problem lies in \PPA-$p$.
    \end{theorem}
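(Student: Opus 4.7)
The plan is to convert the inductive existence argument in Case~1 of the proof of Theorem~\ref{thm:UB} into a polynomial-time oracle Turing reduction whose oracle calls target \CkD{$p_j$} instances for the prime factors $p_j$ of $k$. The first step is to recover, from the bit-representation of $\alpha=\ell/k\in Q^*$, an explicit derivation $\alpha_0,\alpha_1,\ldots,\alpha_r=\alpha$ witnessing membership in $Q^*$, where $\alpha_0\in\{0,1\}$, each $k_j=k_{j-1}\cdot p_j$ for some prime $p_j$, and each step may be followed by complementation $\alpha_j\mapsto 1-\alpha_j$. Such a derivation has length $r=O(\log k)$ and can be read off by iteratively pulling a prime factor from the current denominator.

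The reduction then proceeds inductively along this derivation. I would maintain, at each step $j$, a partition $S_j$ solving $\alpha_j$-ICD using at most $\frac{2(k_j-1)}{k_j}n$ cuts, equivalently at most $\frac{k_j-1}{k_j}n$ positive intervals. The base case $S_0$ is trivial (no cuts). Given $S_{j-1}$, I would concatenate its positive intervals (with the measures of each agent restricted to them and renormalised) into a fresh unit segment, invoke a \CkD{$p_j$}-oracle on this derived instance, and lift the returned $(p_j-1)n$ cuts back to the original interval. The oracle's output partitions the positive region of $S_{j-1}$ into $p_j$ label classes of equal measure; by the averaging argument from the proof of Theorem~\ref{thm:UB}, at least one class contains at most $\frac{k_j-1}{k_j}n$ intervals. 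I would enumerate the $p_j$ classes in polynomial time, choose one that meets this bound, and apply the optional complementation dictated by the derivation to form $S_j$. After $r$ iterations, $S_r$ solves $\alpha$-ICD within the required cut budget $\lfloor\frac{2(k-1)}{k}n\rfloor$.

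For the complexity classification, I would cite the known containment \CkD{$p$}$\in\PPA$-$p$ for prime $p$ from \cite{filos2020topological,filos2020consensus}. Each of the $r$ oracle calls is to \CkD{$p_j$} with $p_j\mid k$, which collectively yield a polynomial-time Turing reduction from $\alpha$-ICD to $\PPA$-$k$. In the special case $k=p^r$, every $p_j$ equals $p$, so all oracle calls target $\PPA$-$p$; applying closure of $\PPA$-$p$ under polynomial-time Turing reductions (a standard result, cf.~\cite{hollender2021classes}) collapses the reduction to show $\alpha$-ICD$\in\PPA$-$p$.

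The chief obstacle I anticipate is the measure-transformation step that converts $S_{j-1}$ into a bona fide \CkD{$p_j$} instance and then lifts the oracle's output back: the positive region of $S_{j-1}$ is a disjoint union of subintervals, so I must concatenate them, rescale each agent's restricted measure to total one, call the oracle, and then pull the $p_j$-equipartition back onto the original interval while accounting for both the renormalisation and the complementation that may happen at this level. Provided the step functions' breakpoints and the intermediate partitions are represented exactly as rationals, each transformation is polynomial-time and preserves exactness of the resulting $\alpha$-ICD solution.
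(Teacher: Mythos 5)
Your proposal matches the paper's argument: both turn the induction in Case~1 of Theorem~\ref{thm:UB} into a polynomial-time Turing reduction making oracle calls to \CkD{$p_j$} for the prime factors $p_j$ of $k$, then invoke $\CkD{p}\in\PPA$-$p$ and closure of $\PPA$-$p$ under Turing reductions for the $k=p^r$ case; you simply spell out the measure-renormalisation and interval-counting steps that the paper leaves implicit. One small omission: to conclude membership in $\PPA$-$k$ (rather than merely in the union of the $\PPA$-$p_j$) you need, and the paper cites, the inclusion $\PPA$-$p\subseteq\PPA$-$q$ whenever $p\mid q$ from \cite{hollender2021classes,goos2019complexity}; your phrase about the calls ``collectively yielding'' a reduction to $\PPA$-$k$ is relying on this without stating it.
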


    \begin{proof}
        Let $k = p_1^{r_1} p_2^{r_2}  \ldots  p_t^{r_t}$, where each $p_i$ is prime and $r_i \geq 1$. In the proof of Theorem~\ref{thm:UB}, $\alpha$-ICD is solved by calling $r_1$ times of \CkD{p_1}, $r_2$ times of \CkD{p_2}, \ldots, $r_t$ times of \CkD{p_t} in a specific order. 
        
        The exact \CkD{p} problem is in the class \PPA-$p$ for any prime $p$ \cite{filos2020topological}. 
        \PPA-$p$ is a subset of \PPA-$q$ if $p$ is a factor of $q$ \cite{hollender2021classes,goos2019complexity}, so we can deduce that solving exact $\alpha$-ICD with $\lfloor \frac{2(k-1)}{k} \cdot n \rfloor$ cuts lies in  \PPA-$k$ under Turing reductions. 
        
        In particular, when $k=p^r$ for a prime $p$, \PPA-$k$ is equal to \PPA-$p$ and \PPA-$p$ is closed under Turing reductions \cite{hollender2021classes,goos2019complexity}.
    \end{proof}

    We also consider the complexity of $\alpha$-ICD when there are more available cuts than necessary.

    \begin{theorem}\label{thm:complexity_more_cuts}
    For any prime $p$ and any ratio $\alpha \in [0,1]$, solving $\alpha$-ICD for inverse-polynomial approximation error $\varepsilon$ with $2(p-1)\cdot \frac{\lceil p/2 \rceil}{\lfloor p/2 \rfloor} \cdot n$ cuts lies in \PPA-$p$.
    \end{theorem}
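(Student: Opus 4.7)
We aim at a polynomial-time many-one reduction from $\varepsilon$-approximate $\alpha$-ICD with the stipulated cut budget $D := 2(p-1)\cdot \frac{\lceil p/2\rceil}{\lfloor p/2\rfloor}\cdot n$ to the exact \CkD{p} problem, which is in \PPA-$p$ \cite{filos2020topological}; composing these immediately yields the claimed containment. Writing $q=\lfloor p/2\rfloor$ and $r=\lceil p/2\rceil$, we have $D=(p-1)N$ where $N := 2rn/q$ (padding $n$ with trivial agents to ensure divisibility), so $D$ is exactly the cut budget consumed by \CkD{p} on $N$ measures. The reduction therefore associates the given $\alpha$-ICD instance with a \CkD{p} instance on $N$ derived measures on the same interval.

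Starting from the input $\mu_1,\dots,\mu_n$, we construct $N$ derived measures $\tilde\mu_1,\dots,\tilde\mu_N$ by associating, to each original agent $i$, a block of $2r/q$ derived measures. Each is a convex combination of $\mu_i$ with piecewise-constant auxiliary distributions supported on short ``marker'' intervals placed outside the supports of the $\mu_j$'s; the markers encode the target ratio $\alpha$ by biasing where the \CkD{p} cuts can fall. Invoking \CkD{p} on the $\tilde\mu_j$'s produces a partition of $[0,1]$ into $p$ labelled pieces $P_1,\dots,P_p$, each of $\tilde\mu_j$-measure exactly $1/p$ for every $j$, using the full budget of $D$ cuts.

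The final step is to form $A_+$ by choosing a grouping of $r$ of the $p$ labels. Among the $\binom{p}{r}$ such choices, the claim is that at least one satisfies $|\mu_i(A_+) - \alpha| \leq \varepsilon$ simultaneously for every $i \in [n]$; the algorithm then finds it in polynomial time by checking all $\binom{p}{r}$ options. Existence rests on an averaging argument: by the $\mathbb{Z}_p$-symmetry of the \CkD{p} output and the construction of the $\tilde\mu_j$'s, the expected $\mu_i$-measure of $A_+$, taken uniformly over all $r$-subsets of labels, equals $\alpha$ up to an error of order $1/N$ coming from the markers, and a pigeonhole step promotes this average bound to a single grouping that is simultaneously $\varepsilon$-good for all $n$ agents.

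\textbf{Main obstacle.} The crux is the design of the marker distributions and the verification of the averaging step with an inverse-polynomial error bound: the markers must exert fine enough control over the \CkD{p} output to steer its ratio towards $\alpha$ without disturbing the relative measures of the $\mu_i$'s by more than $\varepsilon$. The factor $\lceil p/2\rceil / \lfloor p/2\rfloor$ in the cut count reflects the fact that the two extreme $r$-groupings of \CkD{p} output naturally produce ratios $r/p$ and $q/p$, differing by only $1/p$, and the markers must interpolate between these to reach an arbitrary $\alpha$; taking $2r/q$ derived measures per original agent is exactly what supplies the degrees of freedom needed to perform this interpolation with polynomial precision. Boundary cases where $\alpha$ is close to $0$ or $1$ can be handled via the symmetry $\alpha$-ICD $\equiv (1-\alpha)$-ICD together with a linear number of trivial cuts to shift $\alpha$ into $[q/p, r/p]$.
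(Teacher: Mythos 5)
Your plan is a single many-one reduction to one call of \CkD{p}, with auxiliary ``marker'' distributions steering the outcome; the paper instead iterates, calling \CkD{p} roughly $O(\log n)$ times via a Turing reduction. The two are structurally different, and the single-shot version has a gap that cannot be closed as stated.

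The averaging step is the problem. You claim that, uniformly over all $r$-subsets $S$ of the $p$ labels, the expected value of $\mu_i(\bigcup_{k\in S} P_k)$ is $\alpha$ up to a small error. But for any probability measure $\mu_i$ on $[0,1]$, writing $m_k := \mu_i(P_k)$, one has
$$\mathbb{E}_S\Bigl[\mu_i\bigl(\textstyle\bigcup_{k\in S} P_k\bigr)\Bigr] \;=\; \sum_{k=1}^p m_k\,\Pr[k\in S] \;=\; \frac{r}{p}\sum_{k=1}^p m_k \;=\; \frac{r}{p},$$
an identity independent of the $m_k$ and hence of any marker construction. So the average is $r/p$, not $\alpha$, and pigeonhole can only yield a grouping whose $\mu_i$-measure is near $r/p$. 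A related difficulty is that a single \CkD{p} call constrains only the derived measures $\tilde\mu_j$; it imposes nothing on how a marker distribution is apportioned among the labels, so there is no mechanism steering $\mu_i(A_+)$ toward a target other than $r/p$. This is exactly why the paper iterates: each round applies \CkD{p} to the current ``$+$''-region and keeps the $\lceil p/2\rceil$ labels with the fewest intervals, rescaling the attainable ratio by a factor $\lceil p/2\rceil/p$ per round; the set $Q_p^m$ of ratios reachable in $m$ rounds has holes that shrink geometrically (Lemma~\ref{lem:hole}), so $m=O(\log n)$ rounds reach within $1/n$ of any $\alpha$, and the cut budget stays fixed at $2(p-1)\frac{\lceil p/2\rceil}{\lfloor p/2\rfloor}n$ because $(p-1)\frac{\lceil p/2\rceil}{\lfloor p/2\rfloor}$ is the fixed point of the interval-count recurrence $x\mapsto \frac{\lceil p/2\rceil}{p}(x+p-1)$.
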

        
    % \begin{proof}
    Fixing a prime $p$, for any ratio $\alpha$, our plan is to find a ratio $\alpha' = \ell' / p^m$ such that $|\alpha - \alpha'| \leq 1/n$, and showing that solving exact $\alpha'$-ICD is in \PPA-$p$. By definition~\ref{def:prob}, the exact solution for an $\alpha'$-ICD instance is also a $(1/n)$-approximate solution for the $\alpha$-ICD instance with the same set of measure functions.
    The choice of $m$ will be specified later.
    
    To this end, we define a sequence of $m+1$ sets $Q_p^0, Q_p^1, \ldots, Q_p^m \subset [0,1]$ as follows:
    \begin{enumerate}
        \item $Q_p^0 = \{0,1\}$;
        \item if $\alpha \in Q_p^i$, we have $\alpha$, $\frac{\lceil p/2 \rceil }{p}\cdot \alpha$ and $1- \frac{\lceil p/2 \rceil}{p}\cdot \alpha$ in the set $Q_p^{i+1}$.
    \end{enumerate}
    
    We now present two lemmas on the properties of sets $Q_p^0, Q_p^1, \ldots, Q_p^m$.
    \begin{lemma}
        For any $\alpha \in Q_p^t, t\in [m]$, an $\alpha$-ICD instance with $n$ agents can be solved exactly by calling \CkD{p} at most $t$ times, while using at most $2(p-1)\cdot \frac{\lceil p/2 \rceil}{\lfloor p/2 \rfloor} \cdot n$ cuts. 
    \end{lemma}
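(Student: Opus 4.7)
The plan is to proceed by induction on $t$, working throughout on the circle $S^1$ (as in the proof of Theorem~\ref{thm:UB}), where the ``$+$'' and ``$-$'' intervals of any solution alternate and hence have equal counts, so the total cut count is exactly twice the number of ``$+$'' intervals. Let $s_t$ denote the supremum over $\alpha \in Q_p^t$ of the smallest number of ``$+$'' intervals in an exact $\alpha$-ICD solution produced by the recursive construction described below; it suffices to establish $s_t \leq \tfrac{\lceil p/2 \rceil}{\lfloor p/2 \rfloor}(p-1)n$ by induction on $t$, since the lemma's cut bound is then $2s_t$. The base case $t=0$ is immediate: for $\alpha \in \{0,1\}$ the trivial partition uses zero cuts and zero calls of \CkD{p}, and $s_0 \leq 1$, which is within the target whenever $p \geq 2$ and $n \geq 1$.

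For the inductive step, fix $\alpha \in Q_p^t$. By the generation rule of $Q_p^t$, either $\alpha \in Q_p^{t-1}$ (done by the inductive hypothesis), or $\alpha = \tfrac{\lceil p/2 \rceil}{p}\beta$, or $\alpha = 1 - \tfrac{\lceil p/2 \rceil}{p}\beta$ for some $\beta \in Q_p^{t-1}$. Exploiting $\alpha$-ICD $\equiv (1-\alpha)$-ICD together with the freedom to swap the roles of $A_+$ and $A_-$ on $S^1$, I would assume without loss of generality that $\alpha = \tfrac{\lceil p/2 \rceil}{p}\beta$. The construction then has three steps. First, apply the inductive hypothesis to obtain a $\beta$-ICD solution using at most $t-1$ \CkD{p} calls and at most $s_{t-1}$ ``$+$'' intervals. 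Second, rescale the measures restricted to $A_+$ by $1/\beta$ to probability measures and invoke \CkD{p} on this restricted instance; this introduces $(p-1)n$ new cuts inside $A_+$ and partitions $A_+$ into $s_{t-1}+(p-1)n$ labelled subintervals $+_1,\ldots,+_p$, each label carrying mass exactly $\beta/p$ per agent. Third, choose $\lceil p/2 \rceil$ of the $p$ labels to constitute the new $A_+$; every such choice gives mass exactly $\tfrac{\lceil p/2 \rceil}{p}\beta = \alpha$ per agent, and selecting the $\lceil p/2 \rceil$ labels that together cover the fewest subintervals yields, by a standard averaging argument (the $\lceil p/2 \rceil$ smallest of $p$ nonnegative numbers summing to $S$ total at most $\tfrac{\lceil p/2 \rceil}{p}S$), at most $\tfrac{\lceil p/2 \rceil}{p}\bigl(s_{t-1}+(p-1)n\bigr)$ subintervals; after merging adjacent same-side pieces, this upper-bounds the number of ``$+$'' intervals in the resulting $\alpha$-ICD solution.

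This yields the recurrence $s_t \leq \tfrac{\lceil p/2 \rceil}{p}\bigl(s_{t-1}+(p-1)n\bigr)$. Assuming inductively $s_{t-1} \leq \tfrac{\lceil p/2 \rceil}{\lfloor p/2 \rfloor}(p-1)n$ and using the identity $\lceil p/2 \rceil + \lfloor p/2 \rfloor = p$, a short calculation gives
$$s_t \;\leq\; \tfrac{\lceil p/2 \rceil}{p}\Bigl(\tfrac{\lceil p/2 \rceil}{\lfloor p/2 \rfloor}+1\Bigr)(p-1)n \;=\; \tfrac{\lceil p/2 \rceil}{\lfloor p/2 \rfloor}(p-1)n,$$
closing the induction. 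The total cut count is then $2s_t \leq 2(p-1)\cdot\tfrac{\lceil p/2 \rceil}{\lfloor p/2 \rfloor}\cdot n$, with exactly $t$ calls to \CkD{p} accumulated along the recursion. The main subtlety I anticipate is handling the second generation case $\alpha = 1 - \tfrac{\lceil p/2 \rceil}{p}\beta$ cleanly: a direct count of the enlarged ``$+$'' region would absorb all of $A_-^{\text{old}}$ and spoil the recurrence, so one must instead count ``$-$'' intervals (equivalently, the $\lceil p/2 \rceil$ unselected shares, which all lie inside the original $A_+$), and the $S^1$ setup is precisely what makes this reinterpretation yield the same cut count and preserve the recurrence.
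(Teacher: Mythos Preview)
Your proposal is correct and follows essentially the same approach as the paper's own proof: induct on $t$, apply \CkD{p} to the ``$+$'' region of the inductively obtained solution, and select the $\lceil p/2 \rceil$ labels with fewest pieces via averaging to maintain the invariant $s_t \le \tfrac{\lceil p/2\rceil}{\lfloor p/2\rfloor}(p-1)n$. Your write-up is in fact a bit more careful than the paper's (explicit $S^1$ accounting, the recurrence for $s_t$, and the discussion of the $1-\alpha'$ case), though note a small slip in your parenthetical: in the $\alpha = 1 - \tfrac{\lceil p/2\rceil}{p}\beta$ case the new $A_-$ consists of the $\lceil p/2 \rceil$ \emph{selected} shares, not the ``unselected'' ones.
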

    \begin{proof}
        We prove our claim by an induction argument on $t$, which is similar to case 1 of Theorem~\ref{thm:UB}.
    
        \emph{Base Case:} When $\alpha \in \{0,1\}$, $\alpha$-ICD is trivial.
        
        \emph{Induction Step:} Now assume $\alpha \in Q_p^{t-1}$ and we have an exact solution of $\alpha$-ICD by calling at most $t-1$ times of \CkD{p} and using at most $2(p-1)\cdot \frac{\lceil p/2 \rceil}{\lfloor p/2 \rfloor} \cdot n$ cuts. 
        By definition, we also have $\alpha \in Q_p^{t}$. 
        
        Let $\alpha' = \frac{\lceil p/2 \rceil}{p}\cdot \alpha$. We apply \CkD{p} to the intervals with ``$+$'' label in the solution of $\alpha$-ICD, which results in $p$ labels $\{+_1, \ldots, +_p \}$, and each of them corresponds to a set of intervals that consists of $\alpha / p$ fraction of measure. The total number of intervals that have one of labels $\{+_1, \ldots, +_p\}$ are $(p-1)\cdot \frac{\lceil p/2 \rceil}{\lfloor p/2 \rfloor}\cdot n + (p-1)\cdot n$. We can find $\lceil p/2 \rceil$ labels $(+_{i_1},\ldots, +_{i_{\lceil p/2 \rceil}} )$ and they correspond to at most 
        $$\left((p-1)\cdot \frac{\lceil p/2 \rceil}{\lfloor p/2 \rfloor}\cdot n + (p-1)\cdot n \right) \cdot \frac{\lceil p/2 \rceil}{p} = (p-1)\cdot \frac{\lceil p/2 \rceil}{\lfloor p/2 \rfloor}\cdot n$$
        intervals. Therefore, we show that $\alpha'$-ICD can be solved by calling at most $t$ times of \CkD{p} and using at most $2(p-1)\cdot \frac{\lceil p/2 \rceil}{\lfloor p/2 \rfloor} \cdot n$ cuts. The case of $(1-\alpha')$-ICD also holds by the equivalence. Our claim is now proven.
    \end{proof}
    
    \begin{lemma}\label{lem:hole}
        Let $g_i$ $(0 \leq i \leq m)$ to be the size of the largest \emph{hole} of set $Q_p^i$, formally,
        $$g_i = \max \{d \in \mathbb{R}: \exists \alpha \in [0,1] \; s.t. \; (\alpha-d/2, \alpha+d/2) \cap Q_p^i = \emptyset \} \; .$$
        Then, there is $g_i \leq g_{i-1} \cdot \frac{\lceil p/2 \rceil}{p}$. 
    \end{lemma}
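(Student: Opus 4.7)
The plan is to exploit the two affine maps implicit in the generation rule of $Q_p^i$. Writing $c := \lceil p/2 \rceil / p$ (so $c \geq 1/2$), I would define $\phi_1(\alpha) := c\alpha$ and $\phi_2(\alpha) := 1 - c\alpha$. These are bijections from $[0,1]$ onto $[0,c]$ and $[1-c,1]$ respectively, and the recursion immediately gives $\phi_1(Q_p^{i-1}) \cup \phi_2(Q_p^{i-1}) \cup Q_p^{i-1} \subseteq Q_p^i$.

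First I would observe that $\{0,\,1-c,\,c,\,1\} \subseteq Q_p^i$ for every $i \geq 1$: indeed $0,1 \in Q_p^0$, and applying the recursion to $\alpha = 1$ gives $c, 1-c \in Q_p^1 \subseteq Q_p^i$. Next, let $(x,y)$ be a maximal open hole in $Q_p^i$ with $y-x = g_i$. The fact that these four anchor points belong to $Q_p^i$ and partition $[0,1]$ into $[0,1-c] \cup [1-c,c] \cup [c,1]$ forces $(x,y)$ to lie entirely inside one of these sub-intervals; in particular $(x,y) \subseteq [0,c]$ or $(x,y) \subseteq [1-c,1]$.

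I would then pull the hole back through the appropriate map. Say $(x,y) \subseteq [0,c]$ (the other case is symmetric via $\phi_2$). Since $\phi_1$ is a linear bijection of scale $c$, the preimage $\phi_1^{-1}((x,y)) = (x/c, y/c)$ is an open sub-interval of $[0,1]$ of length $(y-x)/c$. Because $\phi_1(Q_p^{i-1}) \subseteq Q_p^i$ and $(x,y)$ is disjoint from $Q_p^i$, this preimage is disjoint from $Q_p^{i-1}$, i.e.\ a hole in $Q_p^{i-1}$. Consequently $(y-x)/c \leq g_{i-1}$, yielding $g_i \leq c \cdot g_{i-1} = g_{i-1} \cdot \lceil p/2 \rceil / p$ as required.

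The main potential pitfall is a hole that straddles the overlap region $[1-c,c]$ of the images $\phi_1([0,1])$ and $\phi_2([0,1])$: such a hole admits no single preimage in $[0,1]$, and splitting it into its $\phi_1$- and $\phi_2$-parts then adding only yields the weaker bound $g_i \leq 2c\,g_{i-1}$. The anchor-point observation is precisely what rules out this straddling case, since $1-c$ and $c$ themselves lie in $Q_p^i$ and cannot be interior to any hole. (For $p=2$ the overlap degenerates to the single point $1/2 = c = 1-c \in Q_p^1$, and the same argument applies.)
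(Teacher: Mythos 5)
Your proof is correct and takes essentially the same approach as the paper: bound holes in $[0,c]$ by pulling back through $\alpha \mapsto c\alpha$, bound holes in $[1-c,1]$ symmetrically (the paper uses the fact that $Q_p^i$ is symmetric under $\alpha \mapsto 1-\alpha$, you use the map $\alpha \mapsto 1 - c\alpha$; these are equivalent here), then combine. What you add is the anchor-point observation that $1-c, c \in Q_p^i$ for $i \geq 1$, which justifies the combining step: the paper simply states ``combining the two parts'' without explaining why a maximal hole cannot straddle the interfaces between $[0,c]$ and $[1-c,1]$, whereas your argument makes this explicit. That is a genuine (if small) gap in the paper's write-up that your proof fills.
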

    \begin{proof}
        Since for any $\alpha \in Q_p^{i-1}$, there is $\frac{\lceil p/2 \rceil }{p}\cdot \alpha \in Q_p^{i}$, the largest \emph{hole} of $Q_p^{i}$ within interval $[0,\frac{\lceil p/2 \rceil}{p}]$ is at most $g_{i-1} \cdot \frac{\lceil p/2 \rceil}{p}$. On the other hand, $\alpha \in Q_p^i$ implies $1-\alpha \in Q_p^i$.  Thus, the largest \emph{hole} of $Q_p^{i}$ within interval $[1-\frac{\lceil p/2 \rceil}{p}, 1]$ is also at most $g_{i-1} \cdot \frac{\lceil p/2 \rceil}{p}$. Combining the two parts, we prove that $g_i \leq g_{i-1} \cdot \frac{\lceil p/2 \rceil}{p}$.
    \end{proof}
    
    \begin{proof}[Proof of Theorem~\ref{thm:complexity_more_cuts}]
        Notice that $\frac{\lceil p/2 \rceil}{p} \leq 2/3$ for any prime $p$. By taking $m = 2 \log_2 n$, for any ratio $\alpha \in [0,1]$, there exists $\alpha' \in Q_p^m$ such that $|\alpha - \alpha'| \leq 1/n$ by Lemma~\ref{lem:hole}. We also have $|Q_p^m| \leq 2\cdot 3^m \leq O(n^4)$, which means that we can calculate the whole set $Q_p^m$ and remember how each value in $Q_p^m$ is generated. Thus, we can find $\alpha'$ and know how to reduce $\alpha'$-ICD to $m$ times of calling \CkD{p} in polynomial time.
        Again, using the fact \CkD{p} is in the class \PPA-$p$ for any prime $p$ \cite{filos2020topological} and \PPA-$p$ is closed under Turing reductions \cite{hollender2021classes, goos2019complexity}, we get solving exact $\alpha'$-ICD with $2(p-1)\cdot \frac{\lceil p/2 \rceil}{\lfloor p/2 \rfloor} \cdot n$ cuts lies in \PPA-$p$.
        
        Finally, the exact solution of $\alpha'$-ICD is also an inverse-polynomial approximated solution for $\alpha$-ICD, which completes our proof. 
    \end{proof}
        
    \begin{corollary}\label{cor:ckdmorecuts}
        For any prime $p$, solving \CkD{k} for inverse-polynomial $\varepsilon$ with $2(k-1)\cdot (p-1) \cdot \frac{\lceil p/2 \rceil}{\lfloor p/2 \rfloor} \cdot n$ cuts lies in \PPA-$p$.
    \end{corollary}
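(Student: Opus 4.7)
The plan is to reduce \CkD{k} with $n$ agents to a sequence of $k-1$ instances of $\alpha$-ICD, each solved by invoking Theorem~\ref{thm:complexity_more_cuts}, and then appeal to the fact that \PPA-$p$ is closed under Turing reductions for any prime $p$.

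First I would extract the shares one at a time. Given the original \CkD{k} instance on $[0,1]$ with measures $\mu_1,\ldots,\mu_n$, call the oracle from Theorem~\ref{thm:complexity_more_cuts} on the $(1/k)$-ICD instance with the same measures, producing a subset of measure (approximately) $1/k$ under every $\mu_i$. This subset becomes the first share $S_1$. Let $R_1 \coloneqq [0,1]\setminus S_1$ and consider the renormalized measures $\tilde\mu_i(\cdot) \coloneqq \mu_i(\,\cdot\,\cap R_1)/\mu_i(R_1)$, which are again piecewise-constant once we concatenate the subintervals of $R_1$ into a fresh copy of $[0,1]$. On this renormalized instance, a share of original measure $1/k$ is a share of renormalized measure $1/(k-1)$, so I would call the $(1/(k-1))$-ICD oracle on $R_1$ to produce $S_2$. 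Iterating, at round $j$ I call the $(1/(k-j+1))$-ICD oracle on the remaining piece to obtain $S_j$; the final share $S_k$ is whatever remains after round $k-1$.

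Each of the $k-1$ oracle calls uses at most $2(p-1)\cdot \frac{\lceil p/2\rceil}{\lfloor p/2\rfloor}\cdot n$ cuts, and because subsequent rounds only subdivide the already-separated remainder, cuts from different rounds do not coincide. Summing, the total number of cuts is at most $2(k-1)(p-1)\cdot \frac{\lceil p/2\rceil}{\lfloor p/2\rfloor}\cdot n$, matching the statement. Each oracle call is in \PPA-$p$ by Theorem~\ref{thm:complexity_more_cuts}, and the whole procedure is a polynomial-time Turing reduction to \PPA-$p$; since \PPA-$p$ is closed under Turing reductions for prime $p$~\cite{hollender2021classes,goos2019complexity}, this places \CkD{k} in \PPA-$p$.

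The main obstacle I anticipate is the error analysis. Each oracle call returns a share of renormalized measure within $\varepsilon'$ of $1/(k-j+1)$, which translates into an additive error of order $\varepsilon'/k$ in the original measure of that share, and these errors accumulate across rounds (both directly, and through the denominator $\mu_i(R_{j-1})$ used in renormalization being itself only approximately $(k-j+1)/k$). A straightforward induction shows that if each call is solved with approximation $\varepsilon'=\varepsilon/(ck)$ for a suitable constant $c$, then the final shares are within $\varepsilon$ of $1/k$ under every $\mu_i$. Since the oracle from Theorem~\ref{thm:complexity_more_cuts} accepts any inverse-polynomial $\varepsilon'$ and $k$ is polynomial in the input size, the global error remains inverse-polynomial, which is what the statement requires.
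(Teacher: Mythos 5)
Your proposal takes essentially the same approach as the paper: iteratively carve off a $1/k$-weight share by solving $(1/(k-j+1))$-ICD on the rescaled remainder via Theorem~\ref{thm:complexity_more_cuts}, sum the cuts over $k-1$ rounds, note the approximation errors accumulate only linearly, and invoke the closure of \PPA-$p$ under Turing reductions. Your write-up is slightly more explicit about the renormalization/concatenation step and the error propagation, but the decomposition and the reasoning are the same as the paper's.
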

    
    \begin{proof}
        % Note the fact that 
        % $$1 = \frac{1}{k} +\frac{1}{k} + \cdots + \frac{1}{k} =  \frac{1}{k} + \frac{k-1}{k} \cdot \frac{1}{k-1} + \cdots + \prod_{i=1}^{k-1} \frac{i}{i+1} \; .$$
        \CkD{k} can be solved by calling $\alpha$-ICD $k-1$ times such that we carve out $1/k$ fraction in each time.
        
        Formally, in the $i$-th ($i \in [k-1]$) step,
        \begin{enumerate}
            \item we first rescale every measure function to weight $1$;
            \item we then solve $(1/(k-i+1))$-ICD on these measure functions and mark intervals in the set $A_+$ (from the solution of $(1/(k-i+1))$-ICD) with label $i$;
            \item finally, we set the function value in $A_+$ to be zero for all measure functions.
        \end{enumerate}
        In the $k$-th step, we mark all the remaining intervals with the label $k$.
        % \jiawei{Rewrite this part.}.
        \Xcomment{
            \begin{enumerate}
                \item In the first step, we carve out $1/k$ fraction of measure by solving the $(1/k)$-ICD instance on it. % Rescale the remained measure functions to make sure each measure function still adds up to $1$. 
                \item In the second step, after rescaling the rest of the measure functions to weight $1$, we carve out another $1/k$ fraction of the original problem by solving the $(1/(k-1))$-ICD on the current set of measure functions.
                \item Repeat \textsf{step 2} with appropriate ratios such that we carve out $1/k$ fraction by each step.
            \end{enumerate}
        }
         %In the first step, we solve the $(1/k)$-ICD instance on the same set of measure functions, Relabel the intervals with ``$+$'' labels in the solution of $(1/k)$-ICD by label $1$ and remove these intervals from the instance. Rescale the measure functions to make sure each measure function still adds up to $1$. 
        % In the $i$-th ($1 \leq i < k)$ step, we carve out $1/(k-i+1)$ fraction of measure on the remained measure functions by solving the $(1/(k-i+1))$-ICD instance. The intervals that get carved out consist of $1/k$ fraction of measure in the original measure functions, and we label them with label $i$.
        % In the $k$-th step, label all the remaining intervals with label $k$.
        The approximation error for solving each $\alpha$-ICD instance will add up linearly. The overall approximation error of \CkD{k} is still inverse-polynomial if the error of solving each $\alpha$-ICD instance is also inverse-polynomial. By Theorem~\ref{thm:complexity_more_cuts}, we conclude that solving \CkD{k} with $2(k-1)\cdot (p-1) \cdot \frac{\lceil p/2 \rceil}{\lfloor p/2 \rfloor} \cdot n$ cuts lies in \PPA-$p$.
    \end{proof}
    
\section{\NP-Hardness for the Exact Number of Cuts}\label{sec:np_hard}

The lower bound and upper bound results in Theorems~\ref{thm:LB} and~\ref{thm:UB} focus on the worst-case scenario. In practice, the number of cuts needed for a specific instance could be much fewer than the bound, and a solution with the minimum number of cuts would be preferred. In this section, we show that it's hard to decide the minimum number of cuts needed for a given $\alpha$-ICD instance, which also implies finding such a solution is hard.

\begin{theorem}\label{thm:np_hard}
    For any $\alpha \in (0,1)$, deciding the minimum number of cuts needed for a given $\alpha$-ICD instance is \NP-hard.
\end{theorem}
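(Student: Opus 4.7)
\textbf{Proof plan for Theorem~\ref{thm:np_hard}.}
The plan is to reduce from the classical NP-complete PARTITION problem to the decision version ``does the given $\alpha$-ICD instance admit a solution using at most $c$ cuts?''. Given a PARTITION instance $(a_1,\ldots,a_n)$ with total $2S$, I build an $\alpha$-ICD instance together with a target threshold $c^*$ such that the minimum number of cuts is at most $c^*$ iff the PARTITION instance is a YES-instance. Since $\alpha$-ICD itself is in general a total search problem, I state the hardness as hardness of the decision/threshold version; the standard observation that NP-hardness of the threshold version implies NP-hardness of finding a minimum-cut solution then finishes the theorem.

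The main construction I would carry out first in the symmetric case $\alpha = \tfrac{1}{2}$. For each $a_i$ I introduce an ``item agent'' whose valuation is supported on two disjoint value blocks $L_i, R_i$ of equal mass $1/2$, with widths proportional to $a_i/(2S)$; all $2n$ blocks are placed in disjoint positions on $[0,1]$, with $L_1,\ldots,L_n$ on the left half and $R_1,\ldots,R_n$ on the right half in a carefully chosen order, separated by small empty gaps. One additional uniform ``calibration agent'' enforces that $A_+$ has Lebesgue measure exactly $\tfrac{1}{2}$, which, by the choice of block widths, is equivalent to a subset-sum constraint on the $a_i$'s. A YES-instance of PARTITION then yields a solution where each $(L_i, R_i)$ is split by assigning the two blocks to opposite sides, the whole thing uses cuts placed only in the gaps, and a direct count gives some fixed target $c^*$ in terms of $n$ and the layout. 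Conversely, I would argue that any solution using at most $c^*$ cuts must place every single $L_i$ entirely on one side of the cut pattern and $R_i$ on the opposite side---otherwise at least one extra cut is introduced inside a block that cannot be absorbed by existing gap cuts---which directly yields a PARTITION certificate.

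To extend to general $\alpha\in(0,1)$, I would proceed in two steps. For rational $\alpha = \ell/k$ I rebalance each item gadget so that the two blocks of agent $i$ carry masses $\alpha$ and $1-\alpha$ (rather than $1/2$--$1/2$), reusing the same combinatorial argument; alternatively, since Theorem~\ref{thm:INS2ICD} preserves both the number of agents/colors and the number of cuts, I can build the analogous necklace-splitting gadget for $\alpha$-INS and transfer hardness through that reduction. For irrational $\alpha$ the INS route is unavailable, so I would include a dedicated ``fine-tuning'' agent whose step function is designed so that the only way to achieve $A_+$-value exactly $\alpha$ for this agent, \emph{without} spending extra cuts, is to place a single cut at a prescribed point in a short calibration interval disjoint from the item gadget; the rest of the analysis then reduces to the rational/halving case on the remainder of $[0,1]$.

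The main obstacle is the tightness direction: ruling out ``clever'' low-cut solutions of the NO-instance that might balance the measures by slicing through blocks in an unexpected coordinated way. Addressing this requires a careful choice of gaps and block widths---typically using widely separated gap sizes so that any cut placed inside a block strictly increases the count of maximal $A_+$-intervals---and a case analysis over which blocks are intact vs. sliced. A subsidiary technical obstacle is making the calibration agent work uniformly across all irrational $\alpha$, which I would handle by encoding $\alpha$ via a single step function with two levels whose ratios realize exactly $\alpha$ on the calibration interval.
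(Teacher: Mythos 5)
Your proposal takes a genuinely different route (reduction from PARTITION, with a uniform "calibration agent" encoding a subset-sum constraint via Lebesgue measure) than the paper, which reduces from {\sc Exactly-1-3Sat} using a four-type gadget construction. Unfortunately, your proposal leaves the heart of the proof unresolved, and I do not think the PARTITION route as described can be completed without adding the kind of structural machinery that the paper builds.

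The crucial gap is the direction you yourself flag as "the main obstacle": showing that in a NO-instance, no configuration of at most $c^*$ cuts can satisfy all agents. Because $\alpha$-ICD is a continuous problem, every measure constraint alone ("$A_+$ has Lebesgue measure $\alpha$", "$A_+$ carries $\alpha$ of item agent $i$'s measure") can be met by infinitely many cut positions, including cuts that slice through value blocks. Your item gadget with two blocks $L_i,R_i$ of mass $1/2$ does not on its own force the "intact/opposite-sides" behaviour you need: a single cut inside $L_i$ that captures the right fraction, with $R_i$ wholly in $A_-$, also satisfies agent $i$ and may not cost you any additional maximal $A_+$-interval (the same cut can close an interval that another agent's gadget opened). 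Appealing to "widely separated gap sizes" is a reasonable intuition, but it is exactly the nontrivial step and you have not specified how to make it work. By contrast, the paper does not try to control cuts via widths at all. It introduces a lattice of \textbf{Type-1} and \textbf{Type-2} agents that, by a packing argument (each "$+$" interval can intersect at most one Type-2 block, and every Type-2 block must be intersected), forces any solution with the target cut count to have \emph{exactly} $Nk-1$ maximal "$+$" intervals, each pinned to lie inside a specific narrow window. Only once this rigid combinatorial skeleton is in place do the variable and clause agents impose their Boolean constraints, and the "exactly one literal true" semantics of {\sc Exactly-1-3Sat} maps cleanly onto "exactly one of the first three clause blocks is covered." Your plan has no analogue of this interval-pinning layer, which is why the NO-direction is open.

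Two secondary issues. First, the INS route for rational $\alpha$ is fine in principle (Theorem~\ref{thm:INS2ICD} preserves cut counts), but it does not save you from the NO-direction problem; the same counting lower bound would be needed there. Second, your handling of irrational $\alpha$ via a "fine-tuning agent" with a single prescribed cut point is not convincing: in a continuous instance, a single agent can rarely force a cut at an exact location without itself consuming cuts that a NO-instance adversary could trade off elsewhere. The paper handles irrational $\alpha$ without any special gadget, because its block weights (e.g., $\alpha/2$, $(1-\alpha)/2$, $3\alpha-1$) are simply defined in terms of the fixed parameter $\alpha$, and the argument is purely combinatorial about which blocks are entirely covered, not about which cut positions are hit.

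In short: the high-level strategy (reduce an NP-hard decision problem to the threshold version of minimum-cut $\alpha$-ICD) is right and matches the paper, but the choice of PARTITION and the absence of a mechanism that forces the number and positions of "$+$" intervals leave a genuine gap. You would need something playing the role of the paper's Type-1/Type-2 scaffolding before the reduction's soundness direction can be argued.
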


Filos-Ratsikas et al.~\cite{filos2016hardness} show that deciding whether a \CH\ instance with $n$ agents has a solution using $n-1$ cuts is \NP-complete. Therefore, we only need to consider the case for $\alpha < 1/2$.

We reduce the \ESAT\ problem to $\alpha$-ICD for any $\alpha < 1/2$. \ESAT\ is a variant of 3SAT where the problem is to determine whether there exists a satisfying assignment such that \emph{exactly one} literal in each clause is true, instead of at least one as in ordinary 3SAT. The \NP-completeness of \ESAT\ is first shown in \cite{Schaefer78} as a special case of Schaefer's Dichotomy Theorem.

    Suppose the given \ESAT\ instance $\phi$ has $N$ variables $x_1, \ldots, x_N$ and $M$ clauses $c_1, \ldots, c_M$.
    Let $k$ be an integer such that $\alpha \in [1/(k+1), 1/k)$. The  $\alpha$-ICD instance constructed is based on the construction in Section~\ref{sec:LB} for the lower bound, but has two more types of agents representing variables and clauses.
    
    \paragraph*{Type-1 agent} There are $N$ type-1 agents and each of them has $2k$ value blocks with ${1}/{2k}$ weight each. 
    
    All the value blocks of type-1 agents are evenly spaced from left to right. Formally, the $j$-th  block of $i$-th type-1 agent locates at interval $[T_1(i,j), T_1(i,j)+1]$, where $$T_1(i,j) = (3M+3) \cdot ((i-1) \cdot 2k + (j-1)) \; .$$ 
    
    \paragraph*{Type-2 agent} There are $N\cdot k-1$ type-2 agents and each of them has a single value block of weight $1$. 
    
    The value block of the $i$-th type-2 agent is placed between the $2i$-th and the $(2i+1)$-th blocks among all $N \cdot 2k$ blocks from type-1 agents. More specifically, it locates 
    at $[T_2(i), T_2(i)+1]$, where $T_2(i)=(3M+4)+(i-1)\cdot (6M+6)$.
    
    \paragraph*{Variable agent} There are $N$ variable agents and each of them has five value blocks. The first four blocks weigh $\alpha/2$ each, and the last block weighs $1-2\alpha$.
    
    The first two blocks of $i$-th variable agent are located at $$[T_1(i,1)+1,T_1(i,1)+2], [T_1(i,1)+3M+2, T_1(i,1)+3M+3]$$ respectively, both of which are between the first two blocks of the $i$-th type-1 agent; the next pair of blocks are placed between the third and the fourth blocks of the $i$-th type-1 agent, that are $$[T_1(i,3)+1,T_1(i,3)+2], [T_1(i,3)+3M+2, T_1(i,3)+3M+3]$$ respectively. 
    
    The last block is located at $[T_1(N,2k)+i,T_1(N,2k)+i+1]$, which is on the right of any blocks of type-1 agents. 
    
    Define intervals $$I^v_{i,0} \coloneqq [T_1(i,1)+1,T_1(i,1)+3M+3] \textrm{ and }  I^v_{i,1} \coloneqq [T_1(i,3)+1, T_1(i,3)+3M+3], $$ where $I^v_{i,0}$ covers the first two blocks, and $I^v_{i,1}$ covers the next two blocks. By definition, $I^v_{i,0}, I^v_{i,1}$ are between the first two and the next two blocks of the $i$-th type-1 agents respectively.
    Roughly speaking, only one of $I^v_{i,0}, I^v_{i,1}$ would be labeled with ``$+$'', which corresponds to the value of variable $x_i$ taking $0$ or $1$.

    \paragraph*{Clause agent} There are $M$ clause agents and each of them has four value blocks. If $\alpha \geq 1/3$, the first three blocks have weight $(1-\alpha)/2$ each and the last block has weight $(3\alpha-1)/2$; otherwise, the first three blocks have weight $\alpha$ each and the last block has weight $1 - 3\alpha$. 
    
    Assuming the $j$-th ($j \leq 3$) literal of clause $c_i$ is $x_k$ or $\neg x_k$, let $\ell_{i,j} = 1$ if it's $x_k$ and $\ell_{i,j} = 0$ if it's $\neg x_k$.
    The $j$-th block of the $i$-th clause agent is then placed at $$[T_c(i,j,k,\ell_{i,j}), T_c(i,j,k,\ell_{i,j})+1] \in I^v_{k,\ell_{i,j}} ,$$where 
    $$T_c(i,j,k,\ell) = T_1(k,1+2\ell)+2+(j-1)\cdot M +i-1, \textrm{ for } \ell \in \{0,1\} \; .$$
    
    If $\alpha \geq 1/3$, the last block of $i$-th clause agent is located in $[T_2(2)+i,T_2(2)+i+1]$, which is next to the value block of the second type-2 agent; otherwise, it is located in $[T_1(N,2k)+N+i,T_1(N,2k)+N+i+1]$, which is on the right of any blocks from type-1 agents.
    
    \begin{figure}[h!]
    \centering 
        \begin{tikzpicture}%[scale=0.9]
            \foreach \x in {0, 4, 8, 12}
            {
                \filldraw[fill=red!60!white, draw=black] (\x, 4) rectangle (\x+1, 4.2); 
                \draw (\x+0.5, 4.55) node {${\frac{1}{2k}}$};
            }
            \draw (-1.3, 4.2) node{\small{\textbf{Type-1:}}};
            
            \filldraw[fill=green!60!white, draw=black] (6, 3) rectangle (7, 3.2);
            \draw (6+0.5, 3.5) node {$1$};
            \draw (-1.3, 3.2) node{\small{\textbf{Type-2:}}};
            
            \foreach \x in {1, 3.5, 9, 11.5}
            {
                \filldraw[fill=yellow!60!white, draw=black] (\x, 2) rectangle (\x+0.5, 2.2); 
                \draw (\x+0.25, 2.55) node {${\frac{\alpha}{2}}$};
            }
            \draw (-1.3, 2.2) node{\small{\textbf{Var $x_i$}:}};
            
            \filldraw[fill=blue!90!white, draw=black] (2, 1) rectangle (2.5, 1.2);
            \draw (2+0.25, 1.5) node {$\alpha$};
            \draw (-1.3, 1.2) node{\small{\textbf{Clause}}};
            
            \draw[ultra thick, loosely dashed] (0.2, 5.2) -- (0.2, 0.5);
            \draw[ultra thick, loosely dashed] (6.25, 5.2) -- (6.25, 0.5);
            
            \draw (0.25, 5.2) node[left]{\small{$\mathbf{-}$}};
            \draw (0.15, 5.2) node[right]{\small{$\mathbf{+}$}};
            
            \draw (6.3, 5.2) node[left]{\small{$\mathbf{+}$}};
            \draw (6.2, 5.2) node[right]{\small{$\mathbf{-}$}};
            
        \end{tikzpicture}
        \caption[]{An incomplete view of value blocks in an $\alpha$-ICD instance with $\alpha < 1/3$. Each color corresponds to a different agent. The clause agent in dark blue has $\neg x_i$ as one of its literals. The two black dashed lines represent possible locations of cuts that satisfy $x_i = 0$. The scale of each interval is not preserved.}
        \label{fig:np_hard_1}
    \end{figure}
    
    \Xcomment{
    \begin{figure}[h!]
            \centering            \includegraphics[scale=\imagesize]{Figure2.pdf}
            % \captionsetup{singlelinecheck=off}
            \caption[]{Arrangement of value blocks in the leftmost part of an $\alpha$-ICD instance when $\alpha < 1/3$. Each color corresponds to a different agent. The clause agent in dark blue has $\neg x_1$ as one of its literals. The scale of each interval is not preserved.}
            \label{fig:np_hard_1}
        \end{figure}
        }
    
    \paragraph*{Analysis}
    We first ignore all the variable agents and clause agents.
    
    \begin{lemma}\label{lem:cndt12}
        At least $2 \cdot (N \cdot k-1)$ cuts are needed if only  type-1 and type-2 agents are considered. Also, such a solution must satisfy the following conditions,
        \begin{enumerate}
            \item there are exactly $(N \cdot k-1)$  intervals labeled with ``$+$'', denoted by $I^+_1, \ldots, I^+_{N \cdot k-1}$ from left to right. The length of $[T_2(i), T_2(i)+1] \cap I^+_i$ is exactly $\alpha$ for any $i$.
            \item the first and the last interval are labeled with ``$-$'';
            \item for any $i\in [N]$, at most one of intervals $I^v_{i,0}, I^v_{i,1}$ could possibly have intersection with ``$+$'' labeled intervals.
        \end{enumerate}
    \end{lemma}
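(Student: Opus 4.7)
My plan is to show that every $+$-interval intersects at most one type-2 block, and that neither endpoint of the partition can be a $+$-interval in a minimum-cut solution; the three conditions then follow by direct counting. The main technical subtlety will be the endpoint analysis, where I must enumerate cut counts across the possible boundary-label configurations and exploit the infeasibility of the case where the leftmost $+$-interval reaches type-2 block $1$. Let $a_s$ denote the number of type-2 blocks intersected by the $s$-th $+$-interval $I_s$, and let $t_+$ denote the total number of $+$-intervals.

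\emph{Bounding $a_s$.} No $+$-interval can fully contain any type-2 block, since that type-2 agent would then have $+$-measure $1$ rather than $\alpha<1$. If $a_s \geq 2$, then $I_s$ spans two consecutive type-2 blocks $T^j, T^{j+1}$ and entirely contains the two type-1 blocks sitting between them (at type-1 positions $2j+1$ and $2j+2$). By the construction these two type-1 blocks lie in the territory of a single type-1 agent, so that agent would receive $+$-measure at least $2/(2k) = 1/k > \alpha$, contradicting $\alpha \in [1/(k+1), 1/k)$. Hence $a_s \leq 1$. Since every type-2 block has strictly positive $+$-measure, $\sum_s a_s \geq Nk - 1$, which combined with $a_s \leq 1$ gives $t_+ \geq Nk - 1$.

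\emph{Endpoint analysis.} Suppose the leftmost interval $I_1 = [0, c_1]$ is labeled $+$. If $a_1 = 1$, then $c_1 > T_2(1) = 3M + 4$, so $I_1$ fully contains T1-blocks $1$ and $2$ of agent $1$, giving agent $1$ $+$-measure $\geq 1/k > \alpha$, a contradiction. Hence $a_1 = 0$; but then the $Nk - 1$ type-2 blocks still require dedicated $+$-intervals with $a_s = 1$, forcing $t_+ \geq Nk$ and cuts $\geq 2Nk - 1 > 2(Nk-1)$. A symmetric argument handles the rightmost interval. Therefore every solution with exactly $2(Nk-1)$ cuts has $t_+ = Nk - 1$ and both endpoints labeled $-$, establishing condition 2 and the interval count in condition 1. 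Moreover $\sum_s a_s = Nk - 1$ forces a bijection between the $+$-intervals and the type-2 blocks in left-to-right order; since each T2-block has $+$-measure $\alpha$, we obtain $|[T_2(i), T_2(i)+1] \cap I^+_i| = \alpha$, completing condition 1.

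\emph{Condition 3.} In a minimum solution every $+$-interval has $a_s = 1$. If some $+$-interval $I_s$ intersects the T2-free gap $I^v_{i, 0}$, then $I_s$ must extend beyond $I^v_{i, 0}$ to reach its associated T2-block — either T2-block $(i-1)k$ (immediately left of T1-block $1$ of agent $i$) or T2-block $(i-1)k + 1$ (immediately right of T1-block $2$ of agent $i$) — so $I_s$ fully contains one of T1-blocks $1$ or $2$ of agent $i$. The analogous argument shows that any $+$-interval intersecting $I^v_{i, 1}$ fully contains one of T1-blocks $3$ or $4$ of agent $i$. If both $I^v_{i, 0}$ and $I^v_{i, 1}$ were intersected by $+$-intervals, agent $i$ would have at least two fully-contained T1-blocks and $+$-measure $\geq 1/k > \alpha$, a contradiction.
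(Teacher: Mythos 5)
Your proof follows essentially the same approach as the paper's: rule out a single $+$-interval straddling two type-2 blocks via the two sandwiched type-1 blocks belonging to a common agent, show that an end interval cannot meet a type-2 block, and derive the characterization by counting; the condition-3 argument is also the paper's.

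There is, however, one genuine gap in the endpoint analysis. After deducing that if $I_1$ is a $+$-interval then $a_1=0$, you conclude that $t_+\ge Nk$ forces ``cuts $\ge 2Nk-1 > 2(Nk-1)$.'' But the relation between the cut count and $t_+$ depends on \emph{both} endpoint labels. With exactly $2(Nk-1)$ cuts there are $2Nk-1$ intervals, so by parity the two endpoints carry the same label; if both are $+$, the cut count is $2t_+-2$, which with only $t_+\ge Nk$ gives $\ge 2Nk-2 = 2(Nk-1)$, yielding no contradiction. To close the gap you must use that \emph{both} endpoint $+$-intervals fail to meet a type-2 block, hence $\sum_s a_s\le t_+-2$, forcing $t_+\ge Nk+1$ and cuts $\ge 2Nk$. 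Your ``a symmetric argument handles the rightmost interval'' treats the right endpoint in isolation but does not combine the two observations, so as written the both-endpoints-$+$ branch is not ruled out. The paper sidesteps this bookkeeping entirely by counting two distinct cuts for each of the at least $Nk-1$ interior $+$-intervals that meet a type-2 block; this gives $\ge 2(Nk-1)$ cuts directly, and at equality immediately forces exactly $Nk-1$ such $+$-intervals with $-$-labels on both ends.
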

    
    \begin{proof}
        Similar to the proof of Theorem~\ref{thm:LB}, notice that each block from type-2 agents is separated by two blocks of type-1 agents, and the sum of their weight is $1/k > \alpha$. Thus, each ``$+$'' labeled interval could only intersect with at most one type-2 agent; also, if a ``$+$'' labeled interval intersects with any type-2 agent, it can not be the first or the last interval in the solution.
        
        Since each type-2 agent should be intersected with at least one  ``$+$'' labeled interval, there are at least $(N \cdot k-1)$ intervals labeled with ``$+$'', and those intervals need at least $2 \cdot (N \cdot k-1)$ cuts. On the other hand, any solution with $2 \cdot (N \cdot k-1)$ cuts must satisfy conditions 1 and 2 by our discussion above.
        
        For the last condition, if both $I^v_{i,0}, I^v_{i,1}$ are intersected by ``$+$'' labeled intervals, then at least two value blocks of $i$-th type-1 agent are fully covered with ``$+$'' label, which is impossible since $1/k > \alpha$.
        
    \end{proof}
    
    Next, we show that the \ESAT\ instance $\phi$ is satisfiable if and only if $2 \cdot (N \cdot k-1)$ cuts are enough for the whole $\alpha$-ICD instance, including variable and clause agents.
    
    \begin{lemma}\label{lem:sat2icd}
        If the \ESAT\ instance $\phi$ is satisfiable, then $2 \cdot (N \cdot k-1)$ cuts are enough for the $\alpha$-ICD instance.
    \end{lemma}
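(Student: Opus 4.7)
The plan is to translate a satisfying assignment $x_1, \ldots, x_N$ of $\phi$ into a cut placement of size $2(Nk-1)$ realizing the structure mandated by Lemma~\ref{lem:cndt12}: exactly $Nk-1$ intervals $I^+_1, \ldots, I^+_{Nk-1}$ labeled ``$+$'' in left-to-right order, with $I^+_j$ intersecting the block of the $j$-th type-2 agent in length exactly $\alpha$. The guiding idea is to encode the truth value of $x_i$ by which of the two variable intervals $I^v_{i,0}, I^v_{i,1}$ is placed in $A_+$: set $p_i = x_i + 1$ so that $I^v_{i, p_i-1}$ is made ``$+$''. Then the literal block for the $j$-th literal of clause $c_m$, which sits inside $I^v_{k', \ell_{m,j}}$, lands in $A_+$ iff $p_{k'} - 1 = \ell_{m,j}$ iff $x_{k'} = \ell_{m,j}$ iff that literal is satisfied by the assignment.

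Next, I would match each variable agent $i$ to a ``selected'' type-2 agent $T_2(s_i)$ so that the corresponding ``$+$'' interval reaches $I^v_{i, p_i-1}$. The two admissible indices are $s_i \in \{(i-1)k + p_i - 1, (i-1)k + p_i\}$: $T_2((i-1)k + p_i - 1)$ lies to the left of pair $p_i$ and its ``$+$'' interval extends rightward through block $2p_i - 1$ of type-1 agent $i$, while $T_2((i-1)k + p_i)$ lies to the right and extends leftward through block $2p_i$. For $k \geq 3$ the option sets of consecutive variable agents are disjoint, so any choice gives an admissible matching. For $k = 2$, the option sets of neighbouring agents overlap at one index, but a greedy left-to-right assignment (always picking the smaller available option) yields a valid matching. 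When $\alpha \geq 1/3$ (which forces $k = 2$), $T_2(2)$ must additionally be reserved for clause block 4 coverage; this only removes one option from agents $1$ and $2$ while still leaving the greedy matching feasible (essentially forcing agent $2$ to take $T_2(3)$).

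For each matched pair $(i, s_i)$, I would place the two cuts bounding $I^+_{s_i}$ by putting one cut inside $T_2(s_i)$'s block so that the ``$+$'' interval captures an $\alpha$-fraction of it on the side adjacent to pair $p_i$, and the other cut inside the appropriate type-1 block of pair $p_i$ at distance $\beta = 2k\alpha - 1$ from its inner edge. This forces type-1 agent $i$ to have $A_+$-measure $(1+\beta)/(2k) = \alpha$, while variable agent $i$ receives $\alpha/2 + \alpha/2 = \alpha$ from its two blocks inside $I^v_{i, p_i-1}$. For each unmatched (and unreserved) $T_2(j)$, set $I^+_j = [T_2(j), T_2(j) + \alpha]$, entirely inside the type-2 block. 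When $\alpha \geq 1/3$, instead set $I^+_2 = [T_2(2) + 1 - \alpha, T_2(2) + M + 1]$, which covers the right $\alpha$-fraction of $T_2(2)$ together with all clause agents' fourth blocks, positioned at $[T_2(2) + i, T_2(2) + i + 1]$ for $i \in [M]$.

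Verification is then an agent-by-agent accounting. Type-1 and type-2 measures equal $\alpha$ by construction; each variable agent's fifth block, placed to the right of every type-1 block, lies in the final ``$-$'' region and contributes nothing. Each clause agent has exactly one literal block in $A_+$ by the \ESAT\ property, and its fourth block is in $A_+$ exactly when $\alpha \geq 1/3$ (via the extended $I^+_2$); the clause agent's total $A_+$-measure is $(1-\alpha)/2 + (3\alpha-1)/2 = \alpha$ in that case and $\alpha + 0 = \alpha$ when $\alpha < 1/3$. I expect the main obstacle to be the T2-matching step for $k = 2$ together with the $T_2(2)$ reservation when $\alpha \geq 1/3$; once that combinatorial step is done, the remainder reduces to checking interval endpoint positions and cumulative block weights.
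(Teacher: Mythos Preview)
Your proposal is correct, but it is considerably more elaborate than the paper's argument, and the extra machinery is avoidable.

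The paper makes a single canonical choice: for every variable $i$ it uses the type-2 agent with index $(i-1)k+1$, whose block sits between blocks $2$ and $3$ of type-1 agent $i$. From there the ``$+$'' interval is sent \emph{leftward} through block~2 into block~1 when $y_i=0$ (covering $I^v_{i,0}$) and \emph{rightward} through block~3 into block~4 when $y_i=1$ (covering $I^v_{i,1}$). Note that $(i-1)k+1$ is always a member of your own option set $\{(i-1)k+p_i-1,(i-1)k+p_i\}$ for either value of $p_i$, so your matching problem has a trivial uniform solution; the greedy argument for $k=2$ is unnecessary. Since the indices $(i-1)k+1$ are distinct and never equal to $2$, the paper never needs to ``reserve'' $T_2(2)$: index $2$ is automatically generic, and the uniform rule $I^+_j=[T_2(j)+(1-\alpha),\,T_2(j)+M+1]$ for all generic $j$ covers the clause fourth-blocks at $j=2$ without a special case. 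Your approach buys nothing extra and costs you the matching discussion, the $k=2$ case split, and the $T_2(2)$ reservation; the paper's fixed assignment eliminates all three at once. The per-agent verification you outline is otherwise the same as the paper's.
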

    
    \begin{proof}
        Let $(y_1, \ldots, y_N)$ be a set of satisfying assignments of $\phi$. We construct a valid solution for the $\alpha$-ICD instance by specifying each ``$+$'' labeled intervals $I^+_1, \ldots, I^+_{N \cdot k-1}$.
        
        \begin{itemize}
            \item for any $i\in[N]$, if $y_i = 0$, interval $I^+_{(i-1)\cdot k + 1}$ is chosen as
                    $$[T_1(i,1)+2k\cdot ((1/k)-\alpha), T_2((i-1)\cdot k +1) + \alpha],$$ 
                    otherwise, it's set as
                    $$[T_2((i-1)\cdot k + 1)+(1-\alpha), T_1(i,4)+1 - 2k\cdot ((1/k)-\alpha)];$$
            \item for any $j \neq (i-1)\cdot k + 1$ for all $i\in [N]$, interval $I^+_j$ is set to be $[T_2(j) + (1-\alpha), T_2(j)+M+1]$.
        \end{itemize}
        
        Now let's verify that all types of agents are satisfied by the solution above.
        
        \begin{description}
            \item[Type-1] For any $i\in[N]$, interval $I^+_{(i-1)\cdot k + 1}$ always covers exactly $\alpha$ fraction of the $i$-th type-1 agent, and any other ``$+$'' labeled intervals have no intersection with the $i$-th type-1 agent.
            \item[Type-2] For any $j \in [N \cdot k-1]$, interval $I^+_j$ covers exactly $\alpha$ fraction of the $j$-th type-2 agent, and any other ``$+$'' labeled intervals have no intersection with the $j$-th type-2 agent.
            \item[Variable] For any $i\in[N]$, interval $I^+_{(i-1)\cdot k + 1}$ always covers one of the $I^v_{i,0}, I^v_{i,1}$, and has no intersection with the other one. Thus, exactly two blocks of weight $\alpha/2$ from the $i$-th variable agent are covered with the label ``$+$''. The last block is always not covered.
            \item[Clause] For any $j\in[M]$, exactly one of the first three blocks of the $j$-th clause agent is covered, since $(y_1, \ldots, y_N)$ is a set of satisfying assignment of $\phi$. If $\alpha \geq 1/3$, the last block is covered by $I^+_{2}$; if $\alpha < 1/3$, the last block is not covered. In both cases, exactly $\alpha$ fraction of the $j$-th clause agents are covered by ``$+$'' label. 
        \end{description}
    \end{proof}
    
    In the reverse direction, we have the following characterization for any valid solution with $2 \cdot (N \cdot k-1)$ cuts.
    
    \begin{lemma}\label{lem:cndt34}
        If there is a valid solution to the $\alpha$-ICD instance with $2 \cdot (N \cdot k-1)$ cuts, then 
        \begin{enumerate}
            \item for any $i\in[N]$, exactly one of the $I^v_{i,0}, I^v_{i,1}$ is fully covered with label ``$+$'', while the other one is fully covered with label ``$-$'' ;
            \item for any $i\in[M]$, exactly one of the first three value blocks of the clause agent $i$ is fully covered with the label ``$+$'', while the other two are fully covered with the label ``$-$''.
        \end{enumerate}
    \end{lemma}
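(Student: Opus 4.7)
The plan is to leverage Lemma~\ref{lem:cndt12} to pin down the coarse structure of the ``$+$'' intervals — exactly $Nk-1$ of them, each centred on exactly one type-$2$ block — and then combine this with the measure constraints of the variable and clause agents to establish the fine-grained structure of~(1) and~(2).

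For part~(1), I would first observe that no ``$+$'' interval can reach the $5$th block of any variable agent, which sits at $[T_1(N,2k)+i,\,T_1(N,2k)+i+1]$, to the right of every type-$1$ block. Only the rightmost interval $I^+_{Nk-1}$ could plausibly extend that far, and if it did it would have to completely contain the type-$1$ blocks $(N,2k-1)$ and $(N,2k)$ along the way, contributing $2\cdot\tfrac{1}{2k}=\tfrac{1}{k}>\alpha$ to the type-$1$ measure of agent $N$ — a contradiction. Hence this $5$th block contributes $0$ to ``$+$'', and the measure constraint for variable agent $i$ becomes $\sum_{j=1}^{4}\beta_j(\alpha/2)=\alpha$, i.e.\ $\sum_{j=1}^{4}\beta_j=2$, with $\beta_j\in[0,1]$ the $+$-fraction of the $j$-th block. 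Lemma~\ref{lem:cndt12}(3) forbids both $I^v_{i,0}$ and $I^v_{i,1}$ from meeting ``$+$'', so one of them — say $I^v_{i,1}$ — lies entirely in ``$-$'', giving $\beta_3=\beta_4=0$ and then $\beta_1=\beta_2=1$.

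The main hurdle is upgrading ``both endpoint blocks of $I^v_{i,0}$ are fully $+$'' to ``$I^v_{i,0}$ is entirely $+$''. The plan is to argue by contradiction: if two distinct ``$+$'' intervals $I^+_a$ and $I^+_b$ with $a<b$ covered the left and right endpoint blocks of $I^v_{i,0}$ respectively, then, as $I^v_{i,0}$ contains no type-$2$ block, $I^+_a$ would have to extend leftward from its own type-$2$ centre past the whole type-$1$ block $(i,1)$ to reach the first variable block, and symmetrically $I^+_b$ would extend rightward past $(i,2)$. The blocks $(i,1)$ and $(i,2)$ together would then contribute $\tfrac{1}{k}>\alpha$ to the type-$1$ measure of agent $i$, contradicting the prescribed total of $\alpha$. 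Hence a single ``$+$'' interval contains both endpoint blocks, and by its connectedness the whole $I^v_{i,0}$ (including the clause blocks in its interior) is ``$+$''. The boundary cases $i=1$ and $i=N$ are easier, since one candidate neighbouring type-$2$ agent is missing and the unique feasible ``$+$'' interval is forced to sweep across all of $I^v_{i,0}$ anyway.

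For part~(2), part~(1) immediately says that each of the first three blocks of clause agent $j$, being contained in some $I^v_{i,\ell}$, is either entirely $+$ or entirely $-$; so $\beta_m\in\{0,1\}$ for $m=1,2,3$. When $\alpha<1/3$, the $4$th block sits to the right of every type-$1$ block and is unreachable by ``$+$'' by the same argument used for the $5$th variable block, so $(\beta_1+\beta_2+\beta_3)\alpha=\alpha$ immediately yields $\sum\beta_m=1$. When $\alpha\in[1/3,1/2)$ (hence $k=2$), writing $\gamma\in[0,1]$ for the $+$-fraction of the $4$th block, the identity $(\beta_1+\beta_2+\beta_3)\tfrac{1-\alpha}{2}+\gamma\tfrac{3\alpha-1}{2}=\alpha$ combined with $\sum\beta_m\in\{0,1,2,3\}$ forces $\sum\beta_m=1$ (and $\gamma=1$), since the other integer values of $\sum\beta_m$ push $\gamma$ outside $[0,1]$. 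In either regime, exactly one of the first three blocks is $+$ and the other two are $-$, completing~(2). The main obstacle, as anticipated, is the two-interval contradiction in part~(1) that rules out the two endpoint blocks of $I^v_{i,0}$ being covered by different ``$+$'' intervals.
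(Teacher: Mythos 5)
Your proposal is correct and follows essentially the same route as the paper: invoke Lemma~\ref{lem:cndt12} to localize the ``$+$'' intervals around the type-2 blocks and to exclude the far-right variable/clause blocks, then use the measure constraints of the variable and clause agents to pin down which blocks are fully ``$+$''. The one place where you go beyond the paper's terse argument is the two-interval contradiction that upgrades ``both $\alpha/2$-weight endpoint blocks of $I^v_{i,\ell}$ are fully $+$'' to ``all of $I^v_{i,\ell}$ (including the interior clause blocks) is $+$''; the paper asserts this directly without spelling out why the covering must come from a single connected ``$+$'' interval, so your explicit argument closes a small gap that the paper leaves implicit, and is a worthwhile addition rather than a deviation.
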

    
    \begin{proof}
        By the second condition in Lemma~\ref{lem:cndt12}, we know that the last block of any variable agents will not be covered by the ``$+$'' label. Then by the third condition in Lemma~\ref{lem:cndt12}, we know that the $i$-th variable agent could be satisfied only when one of the $I^v_{i,0}, I^v_{i,1}$ is fully covered with the label ``$+$''. This concludes our first statement.
        
        % \jiawei{Rewrite the following proof.}
        Any one of the first three value blocks of the clause agent $i$ would be either fully covered with the label ``$+$'', or fully covered with the label ``$-$''. given by the first statement. 
        \begin{itemize}
            \item When $\alpha < 1/3$, exactly one of the first three blocks should be fully covered with the label ``$+$'' since the last block will never be covered.
            \item When $\alpha \geq 1/3$, at most one of the first three blocks should be fully covered with the label ``$+$'', since $2\cdot ((1-\alpha)/2) > \alpha$; also, at least one of the first three blocks should be fully covered with the label ``$+$'', since the weight of the last block is $ (3\alpha-1)/2 < \alpha$.
        \end{itemize}
    \end{proof}

By Lemma~\ref{lem:cndt34}, we could retrieve a satisfying assignment $(y_1, \ldots, y_N)$ for the \ESAT\ instance $\phi$ by letting $y_i = \ell$ if interval $I^v_{i,\ell}$ is covered with label ``$+$''. Combining with Lemma~\ref{lem:sat2icd}, we conclude the correctness of our reduction.
\Xcomment{
\begin{remark*}
    In the reduction above, the measure of each value block is a fraction with a constant size denominator for any rational $\alpha$, and there is no overlap between value blocks. Therefore, the \NP-hardness result also holds for the {\sc $\alpha$-Imbalanced-Necklace-Splitting} problem.
\end{remark*}
}

\section{Future Work}

\paragraph*{The number of cuts needed for $\alpha$-ICD}
There is still a small gap between our lower bound and upper bound (Theorem~\ref{thm:LB}, \ref{thm:UB}) for any rational ratio $\alpha$ that is not in the set $Q^*$.  We believe the lower bound is tight, while the upper bound could be further improved, e.g., by applying an appropriate $\mathbb{Z}_p$ variant of the Borsuk-Ulam theorem.

\paragraph*{Computational complexity} 

    Given the minimal number of cuts that make $\alpha$-ICD a total problem, we show the connection of solving $\alpha$-ICD with the complexity classes \PPA-$k$ in Theorem~\ref{thm:complexity_minimum_cuts} when $\alpha$ is in the set $Q^*$. To extend this result to all rational ratios, the same technique which improves the upper bound may be required.
    When more cuts than necessary are available, we conjecture that with $2n$ cuts, for any ratio $\alpha \in [0,1]$, solving $\alpha$-ICD will be in the intersection of \PPA-$p$ for all prime $p$.

    Any hardness result for $\alpha$-ICD would also be of interest. $\alpha$-ICD seems a promising candidate to be a natural complete problem of classes \PPA-$k$, while very few natural complete problems for \PPA-$k$ are known \cite{goos2019complexity}. Moreover, $\alpha$-ICD is closely related to \CkD{k}, as the proof of Theorem~\ref{thm:UB} builds a reduction from $\alpha$-ICD to \CkD{k}. % For example, if $(1/3)$-ICD is shown to be \PPA-$3$-complete, \CkD{3} also becomes \PPA-$3$-complete automatically.
    The only hardness result currently known for \CkD{k} is that \CkD{3} is \PPAD-hard, given by \cite{filos2020consensus}. 

\begin{paragraph}{Acknowledgements}
We thank Alexandros Hollender for pointing out the paper \cite{stromquist1985sets} to us. We thank the anonymous ITCS reviewers for their helpful comments. Jiawei Li wants to thank Xiaotie Deng for introducing the consensus-halving problem to him.
Paul Goldberg is currently supported by a J.P. Morgan Faculty Research Award. Jiawei Li is supported by Scott Aaronson's Simons award ``It from Qubit''.
\end{paragraph}

\bibliographystyle{alpha}
\bibliography{ref}
%%%%%%%%%%%%%%%%%%%%%%%%%%%%%%%%%%%%%%%%%%%%%%%%%%%%%%%%%
%%%%%%%%%%%%%%%%%%%%%%%%%%%%%%%%%%%%%%%%%%%%%%%%%%%%%%%%%

\appendix
\section{Proof of Lemmas~\ref{lem:frac1},\ref{lem:frac2}}\label{sec:pf}

\lemmaFracOne*

\begin{proof}
    $$\frac{1}{k_1 \cdot k_2} = \frac{\ell_2}{k_2} - \frac{\ell_1}{k_1} = \Bigl(\frac{\ell_2}{k_2} - \frac{\ell}{k}\Bigr) + \Bigl(\frac{\ell}{k} - \frac{\ell_1}{k_1}\Bigr) \geq \frac{1}{k \cdot k_2} + \frac{1}{k \cdot k_1} = \frac{k_1+k_2}{k \cdot k_1 \cdot k_2} \; .$$

    Thus, we have $k \geq k_1 + k_2$.
    \end{proof}

\lemmaFracTwo*

\begin{proof}

We first prove the first statement by induction on $k$.

Base Case $k=2$: The numerator $\ell$ takes value $1$ in this case. It's easy to verify the correctness as we denote $0, 1$ by $0/1, 1/1$ respectively.

        Now for any $k > 2$, we prove by contradiction. Assume $\ell_2 \cdot k_1 - \ell_1 \cdot k_2 = t > 1$, and consider the following three cases:
        \begin{enumerate}
            \item $k_1 < k_2$: take $(\alpha_3, \alpha_4)$ as the smallest interval satisfying that $\alpha_2 \in (\alpha_3, \alpha_4)$ and $\alpha_3, \alpha_4$ have smaller denominator than that of $\alpha_2$. We know that $\alpha_3 \geq \alpha_1$ since $k_1$, the denominator of $\alpha_1$, is also smaller than $k_2$.
            By the induction hypothesis, $\alpha_2$ and $\alpha_3$ are \emph{adjacent}, while $\alpha_1$ and $\alpha_2$ are not \emph{adjacent}, indicating that 
            $\alpha_3$ is strictly larger than $\alpha_1$. Thus, either $(\alpha_1, \alpha_3)$ or $(\alpha_3, \alpha_2)$ contains $\alpha$, contradicting to the fact that $(\alpha_1, \alpha_2)$ is the smallest interval;
            \item $k_1 > k_2$: this case follows from similar argument as the case of $k_1 < k_2$;
            \item $k_1 = k_2$: since $k > 2$, we know that $k_1 > 1$ and $0 < \ell_1 < \ell_2 < k_1$. Take $\alpha_3 \coloneqq \ell_1 / (k_1-1)$, and it's easy to verify that $\alpha_3 \in (\alpha_1, \alpha_2)$. Thus, either $(\alpha_1, \alpha_3)$ or $(\alpha_3, \alpha_2)$ contains $\alpha$, a contradiction.
        \end{enumerate}
        
        We conclude the first statement by induction. Next, we claim that $k = k_1 + k_2$; otherwise, by the Lemma~\ref{lem:frac1}, we must have $k > k_1 + k_2$ and either $(\alpha_1, (\ell_1+\ell_2)/(k_1+k_2))$ or $( (\ell_1+\ell_2)/(k_1+k_2), \alpha_2)$ will contain $\alpha$, which again contradicts to the fact that $(\alpha_1, \alpha_2)$ is the smallest interval.
        
        Now, we argue that $\ell = \ell_1 + \ell_2$. If $\ell > \ell_1+\ell_2$, then
        $$\frac{1}{k} \leq \frac{\ell}{k}-\frac{\ell_1+\ell_2}{k} < \frac{\ell_2}{k_2}-\frac{\ell_1+\ell_2}{k} = \frac{1}{k_2 \cdot k},$$
        which is impossible; the case with $\ell < \ell_1+\ell_2$ could be ruled out with same argument. 
        
        The third statement follows from statements 1 and 2.
    \end{proof}

\end{document}